\numberwithin{figure}{section}%
\numberwithin{table}{section}%
\numberwithin{equation}{section}%
    \newtheorem{problem}[theorem]{Problem}%
\newcommand{\HLinkShort}[2]{\hyperref[#2]{#1\ref*{#2}}}
\newcommand{\HLink}[2]{\hyperref[#2]{#1~\ref*{#2}}}
\newcommand{\HLinkPage}[2]{\hyperref[#2]{#1~\ref*{#2}%
      $_\text{p\pageref{#2}}$}}
\newcommand{\HLinkPageOnly}[1]{\hyperref[#1]{Page~\refpage*{#1}%
      $_\text{p\pageref{#1}}$}}
\newcommand{\HLinkSuffix}[3]{\hyperref[#2]{#1\ref*{#2}{#3}}}
\newcommand{\HLinkPageSuffix}[3]{\hyperref[#2]{#1\ref*{#2}%
      #3$_\text{p\pageref{#2}}$}}
\newcommand{\figlab}[1]{\label{fig:#1}}
\newcommand{\figref}[1]{\HLink{Figure}{fig:#1}}
\providecommand{\lemlab}[1]{\label{xlemma:#1}}
\renewcommand{\lemlab}[1]{\label{xlemma:#1}}
\newcommand{\lemref}[1]{\HLink{Lemma}{xlemma:#1}}%
\newcommand{\problab}[1]{\label{problem:#1}}
\newcommand{\probref}[1]{\HLink{Problem}{problem:#1}}%
\newcommand{\thmlab}[1]{{\label{theo:#1}}}
\newcommand{\thmref}[1]{\HLink{Theorem}{theo:#1}}
\newcommand{\seclab}[1]{\label{sec:#1}}
\newcommand{\secref}[1]{\HLink{Section}{sec:#1}}
\newcommand{\apndlab}[1]{\label{apnd:#1}}
\newcommand{\apndref}[1]{\HLink{Appendix}{apnd:#1}}
\providecommand{\eqlab}[1]{}%
\renewcommand{\eqlab}[1]{\label{equation:#1}}
\renewcommand{\Re}{\mathbb{R}}%
\newcommand{\eps}{{\varepsilon}}%
\newcommand{\etal}{\textit{et~al.}\xspace}
\newcommand{\AlgorithmI}[1]{{\textcolor[named]{RedViolet}{\texttt{\bf{#1}}}}}
\newcommand{\Algorithm}[1]{{\AlgorithmI{#1}\index{algorithm!#1@{\AlgorithmI{#1}}}}}
\newcommand{\cbs}{\mathscr{C}}
\newcommand{\cb}{C}
\newcommand{\bisec}{\beta}
\newcommand{\bisecX}[2]{\beta(#1,#2)}
\newcommand{\distX}[2]{||#1-#2||}
\title{Clustering with Neighborhoods}%
\author{Hongyao Huang}{Department of Computer Science; University of Texas at Dallas; Richardson, TX 75080, USA}{hhuang@utdallas.edu}{}{}
\author{Georgiy Klimenko}{Department of Computer Science; University of Texas at Dallas; Richardson, TX 75080, USA}{gik140030@utdallas.edu}{}{}
\author{Benjamin Raichel}{Department of Computer Science; University of Texas at Dallas; Richardson, TX 75080, USA}{benjamin.raichel@utdallas.edu}{}{}
\authorrunning{H. Huang, G. Klimenko, and B. Raichel}
\keywords{Clustering, Approximation, Hardness}
\begin{document}

\maketitle

\begin{abstract}
In the standard planar $k$-center clustering problem, one is given a set $P$ of $n$ points in the plane, and the goal is to select $k$ center points, so as to minimize the maximum distance over points in $P$ to their nearest center. Here we initiate the systematic study of the clustering with neighborhoods problem, which generalizes the $k$-center problem to allow the covered objects to be a set of general disjoint convex objects $\cbs$ rather than just a point set $P$. For this problem we first show that there is a PTAS for approximating the number of centers. 
Specifically, if $r_{opt}$ is the optimal radius for $k$ centers, then in $n^{O(1/\eps^2)}$ time we can produce a set of $(1+\eps)k$ centers with radius $\leq r_{opt}$.
If instead one considers the standard goal of approximating the optimal clustering radius, while keeping $k$ as a hard constraint, we show that the radius cannot be approximated within any factor in polynomial time unless $\mathsf{P=NP}$, even when $\cbs$ is a set of line segments. When $\cbs$ is a set of unit disks we show the problem is hard to approximate within a factor of $\frac{\sqrt{13}-\sqrt{3}}{2-\sqrt{3}}\approx 6.99$. This hardness result complements our main result, where we show that when the objects are disks, of possibly differing radii, there is a $(5+2\sqrt{3})\approx 8.46$ approximation algorithm. 
Additionally, for unit disks we give an $O(n\log k)+(k/\eps)^{O(k)}$  time $(1+\epsilon)$-approximation to the optimal radius, that is, an FPTAS for constant $k$ whose running time depends only linearly on $n$.
Finally, we show that the one dimensional version of the problem, even when intersections are allowed, can be solved exactly in $O(n\log n)$ time.
%
\end{abstract}


\section{Introduction}
In the standard $k$-center clustering problem, one is given a set $P$ of $n$ points in a metric space and an integer parameter $k\geq 0$, 
and the goal is to select $k$ points from the metric space (or from $P$ in the discrete $k$-center problem), 
called centers, so as to minimize the maximum distance over points in $P$ to their nearest center. Equivalently, the problem can be viewed as covering $P$ with $k$ balls with the same radius $r$, where the goal is to minimize $r$. 
It is well known that it is NP-hard to approximate the optimal $k$-center radius $r_{opt}$ within any factor less than $2$ in general metric spaces \cite{hn-ehblp-79}, and that the problem remains hard to approximate within a factor of roughly 1.82 in the plane \cite{fg-oaac-88}.
For general metric spaces, the standard greedy algorithm of Gonzalez \cite{g-cmmid-85}, which repeatedly selects the next center to be the point from $P$ which is furthest from the current set of centers, achieves an optimal $2$-approximation to $r_{opt}$.  
An alternative algorithm due to Hochbaum and Shmoys \cite{hs-bphkcp-85} also achieves an optimal approximation ratio of $2$ by approximately searching for the optimal radius, observing that if $r\geq r_{opt}$ then all points will be covered after $k$ rounds of repeatedly removing points in $2r$ radius balls centered at any remaining point of $P$.

In this paper we consider a natural generalization of $k$-center clustering in the plane, where the objects which we must cover are general disjoint convex objects rather than points. Specifically, in the \emph{clustering with neighborhoods} problem the goal is to select $k$ center points so that balls centered at these points with minimum possible radius intersect all the convex objects.
This generalization is natural as real world objects may not be well modeled as individual points. This generalized setting has previously been considered for other classical point based problems in the plane, such as the Traveling Salesperson Problem \cite{dm-aatspn-03}, where the authors referred to these objects as neighborhoods. (We instead typically refer to them as \emph{objects}.) To the best of our knowledge we are the first to consider the general problem of clustering convex objects in this context, though as we discuss below many closely related problems have been considered, some of which equate to special or extreme cases of our problem. We remark that since a point is a convex set, the hardness results for $k$-center clustering immediately apply to clustering with neighborhoods.

\paragraph*{Related Work} As clustering is a fundamental data analysis task, countless variants have been considered. Here we focus on variants which share our $k$-center objective of minimizing the maximum radius of the balls at the chosen centers. Bandyapadhyay \etal \cite{bipv-cack-19} considered the colorful $k$-center problem, where the points are partitioned into color classes $P_1, \ldots, P_c$ and the goal is to find $k$ balls with minimum radius which cover at least $t_i$ points from each color class $P_i$. When our convex objects have bounded diameter our problem can be approximately cast as an instance of colorful $k$-center by replacing each object with the set $P_i$ of grid points it intersects and setting $t_i=1$.
General colorful clustering, however, is more challenging as the color classes can be interspersed, which is why \cite{bipv-cack-19} assumes the number of color classes is a constant, allowing for a constant factor approximation, which subsequently was improved \cite{aakz-takcc-20,jss-fckc-20}.
Note that colorful $k$-center itself  generalizes the $k$-center with outliers problem \cite{ckmn-aflpo-01}, corresponding to the case with a single color class $P$ with  $n-t$ outliers allowed.

Xu and Xu \cite{xx-eaacp-10} considered the $k$-center clustering problem on points sets (KCS) where given points sets $S_1,\ldots,S_n$ the goal is to find $k$ balls of minimum radius such that each $S_i$ is entirely contained in one of the balls.
Again when our objects have bounded diameter we can relate our problem to KCS by discretizing the objects. 
Their requirement that all of $S_i$ be covered by a single ball immediately implies that the optimal radius is at least the radius of the largest object, whereas in our case as only a single point of $S_i$ needs to be covered the radius can be arbitrarily smaller. 
In particular, while \cite{xx-eaacp-10} achieves a $(1+\sqrt{3})$-approximation, we show our problem cannot in general be approximated within any factor in polynomial time unless $\mathsf{P=NP}$.

For the special case when $k=1$ or $k=2$, there are several prior results which closely relate to our problem. 
When $k=1$, i.e.\ the one-center problem, the solution can be derived from the farthest object Voronoi diagram, for which Cheong \etal \cite{c-fpvd-11} gave a near linear time algorithm for polygon objects. 
For disk objects, Ahn \etal \cite{a-cpdtc-13} gave a near quadratic time algorithm for the two-center problem. 
Several papers have also considered generalizing to higher dimensions, but restricting the convex objects to affine subspaces of dimension $\Delta$.
Gao \etal \cite{gls-aidiht-08} introduced the $1$-center problem for $n$ lines, achieving a linear time $(1+\eps)$-approximation, as well as a $(1+\eps)$-approximation for higher dimensional flats or convex sets whose running time depends exponentially on $\Delta$. 
Later in \cite{gls-clhs-10} the same authors considered the more challenging $k=2$ and $k=3$ cases for lines, providing a $(2+\eps)$-approximation in quasi-linear time. Subsequently, \cite{ls-casha-13} considered the problem for axis-parallel flats, where they provide an improved approximation for $k=1$, hardness results for $k=2$, and an approximation for larger $k$ where the time depends exponentially on both $k$ and $\Delta$.
While our focus is on the $k$-center objective, we remark that $k$-means clustering for lines was considered by Marom and Feldman \cite{mf-kclbd-19}, who gave a PTAS for constant $k$.

The $k$-center problem for points in a metric space can also be viewed as clustering the vertices according to the shortest path metric of a positively weighted graph. This allows one to consider specific graph classes, for example, Eisenstat \etal \cite{ekm-akpg-14} gave a polynomial time bi-criteria approximation scheme for $k$-center in planar graphs (i.e.\ they allow both the number of centers and radius to be violated). We remark, however, that for our problem, and the various others described above where the objects are not points, the complete graph with all pairwise distances between the objects, is not necessarily metric (i.e.\ it may not be its own metric completion). For example, the triangle inequality would be violated if you had two small convex objects (e.g.\ points) which are far from one another but both are close to some other large convex object. Note that this non-metric behavior is what allows us to prove a stronger hardness of approximation result than that for points in the plane \cite{fg-oaac-88}. 

Finally, we note that there is a polynomial time algorithm for $k$-center when $k$ is a constant and the objects are points in $d$-dimensional Euclidean space, for constant $d$. Specifically, Agarwal and Procopiuc \cite{ap-eaac-02} gave an $n^{O(k^{1-1/d})}$ time exact algorithm, as well as a $O(n\log k)+(k/\eps)^{O(k^{1-1/d})}$ time $(1+\eps)$-approximation. Later B\u{a}doiu \etal \cite{bhi-accs-02} removed the bounded dimension assumption, achieving a $2^{O((k\log k)/\eps^2)}\cdot dn$ time $(1+\eps)$-approximation.

\paragraph*{Our Contribution}
In this paper we initiate the systematic study of the $\mathsf{NP}$-hard clustering with neighborhoods problem. While this problem allows centers to be placed anywhere in the plane, in \secref{ptas} we first argue that one can compute a cubic sized set of points $P$ and a cubic sized set of radii $R$, such that for any integer $k\geq 0$ there is an optimal set $S\subseteq P$ of $k$ centers with optimal radius $r_{opt}\in R$. 
This naturally leads to a PTAS for approximating the optimal number of centers by using Minkowski sums to reduce the problem to instances of geometric hitting set, for which there is a well known PTAS \cite{mr-irghsp-10}. Specifically, if $r_{opt}$ is the optimal radius for $k$ centers, then in $n^{O(1/\eps^2)}$ time we can produce a set of $(1+\eps)k$ centers with radius $\leq r_{opt}$.

In clustering problems, however, often the emphasis is on approximating the radius, while keeping $k$ as a hard constraint. In \secref{hard} we prove this problem is significantly harder, by adapting the hardness proof of \cite{fg-oaac-88} for planar $k$-center. Specifically, we show that the radius cannot be approximated within any factor in polynomial time unless $\mathsf{P=NP}$, even when the convex objects are restricted to disjoint line segments. On the other hand, for disjoint unit disks, a more in depth proof shows the problem is $\mathsf{APX}$-hard, and in particular cannot be approximated within $\frac{\sqrt{13}-\sqrt{3}}{2-\sqrt{3}}\approx 6.99$ in polynomial time unless $\mathsf{P=NP}$. Complementing this result, in \secref{balls} we present our main result, showing that when the objects are disjoint disks (of possibly varying radii) there is a $(5+2\sqrt{3})$-approximation for the optimal radius. 
Significantly, for the case of disks, our approximation factor of $5+2\sqrt{3} \approx 8.46$ is close to our hardness bound of $\frac{\sqrt{13}-\sqrt{3}}{2-\sqrt{3}}\approx 6.99$. Moreover, while our approximation holds for disks of varying radii, interestingly our hardness bound applies even for disks of uniform radii. 

Further probing the complexity of clustering with neighborhoods, in \secref{bounded} we show there is an FPTAS for unit disks when $k$ is bounded by a constant. Specifically, we give an $O(n\log k)+(k/\eps)^{O(k)}$ time $(1+\eps)$-approximation to the optimal radius, by carefully reducing to the algorithm of \cite{ap-eaac-02} for $k$-center.
Finally in \secref{oned}, by utilizing the searching procedure of \cite{f-pslsct-91}, we show that in one dimension the problem can be solved exactly in $O(n\log n)$ time even when intersections are allowed, contrasting our hardness of approximation results in the plane.

\section{Preliminaries}
Given points $x,y\in \Re^d$, $||x-y||$ denotes their Euclidean distance.  
Given two closed sets $X,Y\subset \Re^d$, $\distX{X}{Y} = \min_{x\in X, y\in Y} ||x-y||$ denotes their distance. For a single point $x$ we write $\distX{x}{Y} = \distX{\{x\}}{Y}$.
For a point $x$ and a value $r\geq 0$, let $B(x,r)$ denote the closed ball centered at $x$ and with radius $r$.

Let $\cbs$ be a set of $n$ pairwise disjoint convex objects in the plane. For simplicity, we assume $\cbs$ is in general position. 
We work under the standard assumption that the objects in $\cbs$ are semi-algebraic sets of constant descriptive complexity. Namely, the boundary of each object is composed of a set of algebraic arcs where the sum of the degrees of these arcs is bounded by a constant, and any natural standard operation on such objects, such as computing the distance between any pair of objects, can be carried out in constant time. See Agarwal \etal \cite{ams-rsss-12} for a more detailed discussion of this model.
Our analysis generalizes to the case where $n$ is the total complexity of $\cbs$ and individual objects in $\cbs$ are not required to have constant complexity, however, assuming constant complexity simplifies certain structural statements and the polynomial degree of $n$ in our running time statements.

\begin{problem}[Clustering with Neighborhoods]\problab{main}
Given a set $\cbs$ of $n$ disjoint convex objects in the plane, and an integer parameter $k\geq 0$, find a set of $k$ points $S$ (called centers) which minimize the maximum distance to a convex object in $\cbs$. That is, 
\[
S=\arg \min_{S'\subset \Re^2, |S'|=k} \max_{\cb\in \cbs} \distX{\cb}{S'}.
\]
\end{problem}

Let $S$ be any set of $k$ points, and let $r=\max_{\cb\in \cbs} \distX{\cb}{S}$. We refer to $r$ as the \emph{radius} of the solution $S$, since $r$ is the minimum radius such that the set of all balls $B(s,r)$ for $s\in S$, intersect all $\cb\in \cbs$. If $S$ is an optimal solution then we refer to its radius $r_{opt}$ as the optimal radius. 

In this paper we will consider two types of approximations.%
\footnote{We refrain from using the standard bi-criteria approximation terminology to emphasize that in each case only the size or only the radius is being approximated, not both.
}
Let $\cbs, k$ be an instance  of \probref{main} with optimal radius $r_{opt}$. For a value $\alpha\geq 1$, we refer to a polynomial time algorithm as an \emph{$\alpha$-size-approximation} if it returns a solution $S$ of radius $\leq r_{opt}$ where $|S|\leq \alpha k$. Alternatively, we refer to a polynomial time algorithm as an \emph{$\alpha$-radius-approximation} if it returns a solution $S$ of radius $\leq \alpha r_{opt}$ where $|S|=k$. Often we refer to the latter radius case simply as an $\alpha$-approximation. 

\section{Canonical Sets and a PTAS for Approximating the Size}
\seclab{ptas}

In this section we show that while \probref{main} allows centers to be placed anywhere in the plane, we can compute a canonical cubic sized set of points $P$ and a set of corresponding radii $R$, such that for any integer $k\geq 0$ there is an optimal set $S\subseteq P$ of $k$ centers with optimal radius $r_{opt}\in R$. We then use this property to give a PTAS for \probref{main} when approximating the size of an optimal solution. Specifically, for any fixed $\eps>0$, we give a $(1+\eps)$-size-approximation with running time $n^{O(1/\eps^2)}$. In \secref{balls}, we will again use this canonical set when designing our constant factor radius-approximation for disks.

The \emph{bisector} of two convex objects $\cb,\cb'$ is the set of all points $x$ in the plane such that $\distX{x}{\cb}=\distX{x}{\cb'}$. Let $\bisecX{\cb}{\cb'}$ denote the bisector of $\cb$ and $\cb'$. 
As discussed in \cite{ky-vdpco-03}, any set $\cbs$ of $n$ disjoint constant-complexity convex objects in general position satisfies the conditions of an abstract Voronoi diagram \cite{k-cavd-89}. In particular we can assume the following:
\begin{enumerate}[1)]
 \item For any $\cb, \cb'\in \cbs$ we have that $\bisecX{\cb}{\cb'}$ is an unbounded simple curve.
\item The intersection of any two bisectors is a discrete set with a constant number of points.
\end{enumerate}
We point out that in the following lemma there is a single pair of sets $P,R$ which works simultaneously for all values of $k$. 

\begin{lemma}\lemlab{canon}
 Let $\cbs$ be a set of $n$ disjoint convex objects. In $O(n^3 \log n)$ time one can compute a set of $O(n^3)$ points $P$, and a corresponding set of $O(n^3)$ radii $R$, such that for any value $k\geq 0$ for the instance $\cbs,k$ of \probref{main} there is an optimal set of $k$ centers $S$ with optimal radius $r_{opt}$ such that $S\subseteq P$ and $r_{opt}\in R$. 
\end{lemma}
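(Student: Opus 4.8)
The plan is to show that any optimal solution can be transformed, without increasing its radius, into one in which every center sits at a combinatorially determined position, and then to collect all such positions into $P$ and their associated radii into $R$. The structural core is the \emph{one-center problem} for a set of convex objects: choosing a single point $x$ minimizing $\max_{\cb} \distX{x}{\cb}$. For each fixed $\cb$ the map $x \mapsto \distX{x}{\cb}$ is convex, so its pointwise maximum is convex and the sublevel sets $\{x : \distX{x}{\cb} \le r\}$ are convex. The optimal one-center radius is therefore the least $r$ for which these sets share a point, and by Helly's theorem in the plane this radius, together with a point realizing it, is determined by at most three objects. I would record this as a short claim: for any $\cbs' \subseteq \cbs$ there is $B \subseteq \cbs'$ with $|B| \le 3$ whose one-center coincides with that of $\cbs'$.

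Next, starting from any optimal solution $S$ with radius $r_{opt}$, I would assign each object to a nearest center, partitioning $\cbs$ into clusters, and relocate each center to the one-center of its own cluster. The one-center radius of a cluster is at most the largest old distance within that cluster, which is at most $r_{opt}$, so every object remains within distance $r_{opt}$ of its relocated center; the transformed solution is thus still optimal, and each center now sits at the one-center of the objects it covers, hence (by the claim) at the one-center of at most three of them.

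I would then characterize these positions. A first-order optimality argument shows that when a cluster's one-center has positive radius $\rho$, at least two objects must be at distance exactly $\rho$, since otherwise one could move toward the unique farthest object and strictly decrease $\rho$. If exactly two objects are tight, the center is the point of their bisector $\bisecX{\cb}{\cb'}$ minimizing the common distance, an $O(1)$-computable point, one per pair, giving $O(n^2)$ candidates. If three are tight, the center is equidistant to all three, i.e.\ a common intersection of their pairwise bisectors, which contributes $O(1)$ points per triple by the general-position and constant-complexity assumptions, giving $O(n^3)$ candidates. The remaining case $\rho = 0$ forces the center to lie inside its farthest object; as the objects are disjoint, such a cluster contains a single object, and I would handle this by additionally placing into $P$ one fixed boundary point per object (for instance a point realizing a closest pair). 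Collecting these families yields $|P| = O(n^3)$, and taking $R$ to be the distances from each candidate to its defining objects yields $|R| = O(n^3)$, with $r_{opt}$ itself appearing as the radius of the bottleneck cluster's center.

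For the running time, each of the $O(n^3)$ triples and $O(n^2)$ pairs is processed in constant time under the stated model, and the extra $\log n$ factor comes from sorting and deduplicating the radii (equivalently, from constructing the relevant bisector structure). The step I expect to be the main obstacle is the structural pinning: proving cleanly, under general position, that an optimal cluster center is determined by at most three objects and by at least two whenever its radius is positive, and correctly dispatching the degenerate zero-radius single-object clusters, so that \emph{every} center, not merely the bottleneck one, is guaranteed to lie in $P$.
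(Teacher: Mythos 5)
There is a genuine gap, and it sits exactly where you predicted trouble: the ``exactly two tight'' case. Your reduction of a cluster to its one-center is fine, and so is the first-order argument that at least two objects are tight when the cluster radius $\rho$ is positive (and at most three by general position); the three-tight case and the zero-radius single-object case correctly match the paper's triple-equidistant points and its one-point-per-object set. The problem is your claim that when exactly two objects $\cb,\cb'$ are tight, the center may be replaced by \emph{the} point of $\bisecX{\cb}{\cb'}$ minimizing the common distance, ``one per pair.'' Your convexity argument correctly shows the cluster center $x$ is a \emph{global} minimizer of $g=\max(\distX{\cdot}{\cb},\distX{\cdot}{\cb'})$, but that minimizer need not be unique: with $h=\distX{\cb}{\cb'}$, the minimizer set is $(\cb\oplus B(h/2))\cap(\cb'\oplus B(h/2))$, which can be a whole segment. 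Concretely, take $\cb=[(0,0),(10,0)]$ and $\cb'=[(0,2),(10,2)]$; every point of $[0,10]\times\{1\}$ minimizes $g$ with value $1$. Add a third cluster object, the point $(10,1.5)$. The cluster's one-center has radius $1$ with exactly the two segments tight and must lie near $(10,1)$, but a canonical per-pair choice such as $(5,1)$ is at distance $>1$ from the third object, so a $P$ built with one point per pair contains no center that covers this cluster within $r_{opt}$. Nothing in the paper's stated general-position assumptions (bisectors are unbounded simple curves; bisector intersections are discrete) excludes such ties, and your proof neither establishes nor assumes uniqueness of the per-pair minimizer. (Separately, your Helly claim that the one-center of $\cbs'$ \emph{coincides} with that of some $B\subseteq\cbs'$, $|B|\le 3$, is imprecise --- Helly pins down the radius, not the point --- but since your pinning argument never uses it, this is harmless.)

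The paper's proof is built precisely to avoid this: instead of one candidate per bisector, it takes $O(n)$ per bisector. It marks on each bisector $\beta=\bisecX{\cb}{\cb'}$ the discrete set $T_\beta$ of points at which some third object becomes equidistant, and places into $P$ the distance-minimizing point of each interval between consecutive marks. Because the original center and its replacement lie in the \emph{same} interval, $\cb$ and $\cb'$ remain the furthest objects of the cluster pointwise throughout that interval, so every other cluster object stays covered --- no uniqueness is needed, and the total count is still $O(n^3)$ since the triple-equidistant points already dominate. Your argument becomes correct, and otherwise stays intact (the two-tight first-order lemma, the degenerate clusters, the radii bookkeeping, and the running time all go through), if you replace ``one minimizer per pair'' with ``one minimizer per interval of each bisector, delimited by the triple-equidistant points,'' i.e., by adopting this interval refinement.
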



\begin{proof}
 Let $I(\cbs)$ be a set containing exactly one (arbitrary) point from each convex object in $\cbs$. For any number of centers $k$, let $S$ be any optimal solution, and let $r_{opt}$ be the optimal radius. Consider an arbitrary center $s\in S$. Let $\cbs'$ be the subset of objects in $\cbs$ which intersect the ball $B(s,r_{opt})$. We can assume $\cbs'$ is non-empty, as otherwise the center $s$ does not cover any convex object within radius $r_{opt}$ and so can be thrown out. Moreover, if $|\cbs'|=1$ then we can assume $s$ is the point from $I(\cbs)$ which intersects this one convex object. So assume $|\cbs'|>1$, and let $\cb$ be the convex object in $\cbs'$ which lies furthest from $s$. Now consider moving $s$ continuously toward the convex object $\cb$. As we do so the distance from $s$ to $\cb$ monotonically decreases. Thus so long as $\cb$ remains the furthest convex object from $s$ in $\cbs'$, the ball $B(s,r_{opt})$ still intersects all of $\cbs'$ (i.e.\ we did not increase the solution radius). Now if $\cb$ always remains the furthest, when $s$ eventually reaches and intersects $\cb$ then this will imply its distance to all objects in $\cbs'$ is zero, which is a contradiction as we assumed the convex objects do not intersect.
 Otherwise, at some point $\cb$ is no longer the furthest, which implies we must have crossed a bisector $\bisecX{\cb}{\cb'}$ for some other convex object $\cb'\in \cbs'$.

So far we have shown one can assume each center $s$ either is in  $I(\cbs)$, or lies on the bisector $\bisec=\bisecX{\cb}{\cb'}$ of the two objects, $\cb,\cb'$, which lie furthest away from $s$ among the set of objects $\cbs'$ which intersect the ball $B(s,r_{opt})$.
In the latter case, let $T_\beta$ denote the set of all points $p$ on $\beta$ such that there exists a third object $X\in \cbs$ such that $\distX{p}{X}=\distX{p}{\cb}$ (or equivalently $\distX{p}{X}=\distX{p}{\cb'}$). 
Note that such points lie at intersections of bisectors and thus from the above discussion before the lemma, we know $T_\beta$ is a discrete set. 
As $\beta$ is a simple curve, we can view points in $T_\beta$ as being ordered along $\beta$. 
Suppose that $s\notin T_\beta$, and
let $p$ and $q$ be the points of $T_\beta$ which come immediately before and after $s$ along $\beta$, and let $[p,q]$ denote the portion of $\beta$ lying between these points. 
(This interval may be unbounded to one side if $s$ comes after or before all points in $T_\beta$.)
Recall that $\cbs'$ is the subset of objects intersecting $B(s,r_{opt})$, and $\cb$ and $\cb'$ are the furthest from $s$ among those in $\cbs'$. 
Observe that for any other point $z$ in $[p,q]$, $\cb$ and $\cb'$ must also be the furthest objects from $z$ among those in $\cbs'$, as otherwise as we move continuously along $\beta$ from $s$ to $z$ we must cross another point from $T_\beta$ before reaching $z$ and there are no such points in $(p,q)$.
Thus if we replace $s$ with the point in $[p,q]$ minimizing the distance to $\cb$ (or equivalently $\cb'$) then all objects previously intersected by $B(s,r_{opt})$ will remain intersected by $B(s,r_{opt})$. 
 
Let $M(\cbs)$ be a set containing, for each bisector $\beta$, the set $T_\beta$ and one minimum distance point from each such interval $[p,q]$. 
We thus have argued that the points of $S$ can be assumed to lie in $P=I(\cbs)\cup M(\cbs)$.
 As for the running time and size of these sets, first observe that $I(\cbs)$ has size $n$ and can be trivially computed in $O(n)$ time. For the set $M(\cbs)$, first observe that there are $O(n^2)$ bisectors. 
 For any bisector $\bisec$, the set $T_\beta$ of intersection points of $\bisec$ with other bisectors that are equidistant at the intersection point, has size $O(n)$, since by general position every point is equidistant to at most 3 objects and as mentioned above any pair of bisectors intersect in a constant number of points.
 (In other words, we ultimately consider all $O(n^3)$ points equidistant to three objects, as opposed to all $O(n^4)$ bisector intersections.) 
 Thus the set $M(\cbs)$, and correspondingly $P$, has size $O(n^3)$ as claimed. For the running time, as the objects in $\cbs$ all have constant complexity, so do their bisectors, and thus $T_\beta$ can be computed in $O(n)$ time. The minimum points of $M(\cbs)$ on $\beta$, can thus be computed by sorting $T_\beta$ along $\beta$, in $O(n \log n)$ time, and then computing the minimum point in constant time for each constant complexity interval between consecutive pairs of points from $T_\beta$ along $\beta$. Thus over all $O(n^2)$ bisectors it takes $O(n^3 \log n)$ time to compute $M(\cbs)$.
\end{proof}
  
We now argue the canonical sets $P$ and $R$ from the above lemma naturally lead to a PTAS for size-approximation by using Minkowski sums.
For sets $A,B \subset \Re^2$, let $A\oplus B = \{a+b\mid a\in A, ~b\in B\}$ denote their Minkowski sum. Let $B(r)$ denote the ball of radius $r$ centered at the origin. Then we write $\cbs\oplus B(r) = \{\cb\oplus B(r) \mid \cb\in \cbs\}$. A set of points $S$ is called a \emph{hitting set} for a set of objects if every object has non-empty intersection with $S$.

\begin{observation}
 A set $S$ of $k$ centers is a solution to \probref{main} of radius $r$ if and only if $S$ is a hitting set of size $k$ for $\cbs\oplus B(r)$. 
 This holds since for any $\cb\in \cbs$ and $s\in S$,  $B(s,r)\cap \cb\neq \emptyset$ if and only if $s\in \cb\oplus B(r)$.
\end{observation}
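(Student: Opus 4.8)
The plan is to reduce the observation to a single pointwise equivalence and then chain definitions. First I would prove the key claim stated in the observation: for a fixed center $s$ and object $\cb$, we have $B(s,r)\cap \cb\neq\emptyset$ if and only if $s\in \cb\oplus B(r)$. For the forward direction, if some point $x$ lies in both $B(s,r)$ and $\cb$, then $\distX{x}{s}\leq r$ and $x\in\cb$; writing $s = x + (s-x)$ and noting $s-x\in B(r)$ since its norm is at most $r$, exhibits $s$ as a point of $\cb\oplus B(r)$. For the reverse direction, if $s = c+b$ with $c\in\cb$ and $b\in B(r)$, then $\distX{s}{c} = \|b\|\leq r$, so $c\in B(s,r)\cap\cb$. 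The only structural fact used is that $B(r)$ is centrally symmetric (equivalently $B(r)=-B(r)$), which makes the roles of $s$ and $c$ interchangeable.

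Second, I would unfold the two sides of the stated equivalence using the definitions from \secref{ptas}. By definition $S$ is a solution of radius (at most) $r$ precisely when every object $\cb\in\cbs$ is intersected by some ball $B(s,r)$ with $s\in S$, i.e.\ $\distX{\cb}{S}\leq r$ for all $\cb$. On the other hand, $S$ is a hitting set for $\cbs\oplus B(r)$ precisely when for every $\cb\in\cbs$ there is some $s\in S$ with $s\in\cb\oplus B(r)$. Applying the pointwise claim to each pair $(\cb,s)$ shows that these two conditions are literally the same, which completes the proof; the size constraint $|S|=k$ carries over unchanged on both sides.

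The statement is essentially definitional, so I do not anticipate a genuine obstacle; the only point requiring care is the interpretation of ``of radius $r$''. Since the radius of $S$ is defined as the exact value $\max_{\cb\in\cbs} \distX{\cb}{S}$, whereas hitting $\cbs\oplus B(r)$ corresponds to the inequality $\distX{\cb}{S}\leq r$, I would state the equivalence for solutions of radius \emph{at most} $r$, which is the form needed for the subsequent reduction to geometric hitting set. The exact-radius version then follows by taking the smallest $r\in R$ for which the hitting-set condition holds, using the canonical set of radii from \lemref{canon}.
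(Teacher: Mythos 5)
Your proposal is correct and matches the paper's own (one-line) justification exactly: the paper rests the observation on precisely the pointwise equivalence $B(s,r)\cap \cb\neq \emptyset \iff s\in \cb\oplus B(r)$, which you prove in detail via the decompositions $s = x + (s-x)$ and $s = c + b$, and the rest is the same unfolding of definitions. Your caveat that ``of radius $r$'' should be read as ``radius at most $r$'' is a fair precision; this is indeed how the paper uses the observation in its reduction to hitting set.
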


In the geometric hitting set problem we are given a set $\mathcal{R}$ of $n$ regions and a set $P$ of $m$ points in the plane, and the goal is to select a minimum sized hitting set for $\mathcal{R}$ using points from $P$.
The above observation implies we can reduce any given instance $\cbs, k$ of \probref{main} to multiple instances of geometric hitting set. Specifically, by \lemref{canon}, in $O(n^3\log n)$ time we can compute a set $R$ of $O(n^3)$ values, one of which must be the optimal radius $r_{opt}$. Then for each $r\in R$ we construct a hitting set instance where $\mathcal{R}=\cbs\oplus B(r)$, and $P$ is the set of points from \lemref{canon}.
By the above observation, if $r<r_{opt}$, then the hitting set instance requires more than $k$ points, and if $r\geq r_{opt}$ then it requires at most $k$ points. Therefore, given an algorithm for geometric hitting set we can use it to binary search for $r_{opt}$. 

While hitting set is in general $\mathsf{NP}$-hard to approximate within  logarithmic factors \cite{rs-scep-97}, in our case there is a PTAS as the regions are nicely behaved. A collection of regions in the plane is called a set of \emph{pseudo-disks} if the boundaries of any two distinct regions in the set cross at most twice. Mustafa and Ray \cite{mr-irghsp-10} showed that there is an $nm^{O(1/\eps^2)}$ time PTAS for geometric hitting set when $\mathcal{R}$ is a collection of $n$ pseudo-disks and $P$ is a set of $m$ points.
It is known that if we take the Minkowski sum of a single convex object with each member of a set of disjoint convex objects, then the resulting set is a collection of pseudo-disks (see for example \cite{aps-su-08}). Thus $\cbs\oplus B(r)$ is a collection of pseudo-disks. 
Therefore, by the above discussion, we have the following theorem.
As the decision procedure is now approximate, the binary search must be modified to look at larger radii when the hitting set algorithm returns $>(1+\eps)k$ points, and smaller radii otherwise. 
(This yields an adjacent pair $r<r'$ such that $r<r_{opt}$, implying $r'\leq r_{opt}$, and an $r'$- cover of the input using $\leq (1+\eps)k$ points.)

\begin{theorem}
There is a PTAS for \probref{main} for approximating the optimal solution size. That is, for any fixed $\eps>0$, there is a $(1+\eps)$-size-approximation with running time $n^{O(1/\eps^2)}$.
\end{theorem}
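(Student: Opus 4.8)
The plan is to turn size-approximation into a search over the canonical radius set $R$, glued together from \lemref{canon}, the pseudo-disk structure of $\cbs\oplus B(r)$, and the hitting set PTAS of Mustafa and Ray \cite{mr-irghsp-10}. First I would invoke \lemref{canon} to precompute, in $O(n^3\log n)$ time, the point set $P$ and radius set $R$, each of size $O(n^3)$, with the guarantee that $r_{opt}\in R$ and is witnessed by an optimal set of $k$ centers contained in $P$. For each candidate $r\in R$ I form the geometric hitting set instance with regions $\cbs\oplus B(r)$ and candidate points $P$; by the earlier observation, together with the fact that Minkowski sums of disjoint convex objects with a disk form pseudo-disks, this is a valid pseudo-disk hitting set instance, so a single run of \cite{mr-irghsp-10} returns, in $n\cdot(n^3)^{O(1/\eps^2)}=n^{O(1/\eps^2)}$ time, a hitting set of size at most $(1+\eps)$ times the minimum hitting set size $\mathrm{OPT}(r)$ for that instance.

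The crux is to relate this approximate output to $r_{opt}$ and use it to drive a search. I would establish two one-directional implications, both leaning on the monotonicity of $\mathrm{OPT}(r)$ in $r$: (i) if $r\geq r_{opt}$ then the optimal $k$-center solution in $P$ from \lemref{canon} already hits every $\cb\oplus B(r)$, so $\mathrm{OPT}(r)\leq k$ and the PTAS necessarily returns at most $(1+\eps)k$ points; and (ii) conversely, if the PTAS returns more than $(1+\eps)k$ points then $\mathrm{OPT}(r)>k$, which by the observation forces $r<r_{opt}$, since no $k$ centers can cover $\cbs$ within radius $r$.

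I would then binary search over the sorted values of $R$, probing $r$ and recursing on larger radii when the PTAS returns more than $(1+\eps)k$ points and on smaller radii otherwise. Using (i) and (ii) I would maintain the invariant that the current upper endpoint always admits a cover of size $\leq(1+\eps)k$, while any radius on which the decision "failed" lies strictly below $r_{opt}$; because $r_{opt}\in R$, this pins the returned value to an adjacent pair $r<r'$ with $r<r_{opt}$, hence $r'\leq r_{opt}$, accompanied by a valid $r'$-cover of at most $(1+\eps)k$ centers. That output, with radius $r'\leq r_{opt}$ and size $\leq(1+\eps)k$, is exactly a $(1+\eps)$-size-approximation, and the running time is the $O(n^3\log n)$ preprocessing plus $O(\log n)$ PTAS invocations, i.e.\ $n^{O(1/\eps^2)}$.

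I expect the main obstacle to be the correctness of the search rather than the geometry. Since the hitting set PTAS is only approximate, the predicate ``the PTAS returns $\leq(1+\eps)k$ points'' need not be monotone in $r$, so a naive threshold argument is unjustified. The fix I would make precise is to rely solely on the two one-directional facts above: a ``fail'' certifies $r<r_{opt}$, and $r_{opt}$ itself never fails, which is enough to guarantee that the standard binary search converges to a pair $r<r'$ as described. (As a robust fallback, a linear scan over all $O(n^3)$ values of $R$ returning the smallest $r$ whose PTAS output is $\leq(1+\eps)k$ attains the same $n^{O(1/\eps^2)}$ bound while sidestepping monotonicity entirely.)
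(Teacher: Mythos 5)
Your proposal is correct and follows the paper's own proof essentially step for step: the canonical sets $P,R$ from \lemref{canon}, the reduction to geometric hitting set on the pseudo-disks $\cbs\oplus B(r)$ solved by the PTAS of Mustafa and Ray, and the same modified binary search whose correctness rests on exactly your two one-directional facts (a fail certifies $r<r_{opt}$, and $r_{opt}$ itself never fails), terminating at an adjacent pair $r<r'$ with $r<r_{opt}\leq r'$ and a cover of size $\leq(1+\eps)k$. The only detail the paper adds that you omit is a minor technical remark that the hitting set PTAS assumes properly intersecting objects, which the paper handles by perturbing the radius by an infinitesimal amount smaller than half the minimum non-zero inter-object distance in $\cbs\oplus B(r)$.
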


We remark that the PTAS of \cite{mr-irghsp-10} implicitly assumes the objects are in general position, that is if two objects intersect then they properly intersect (i.e.\ their interiors intersect). While $\cbs$ satisfies this property, it may not after we take the Minkowski sum with a given radius. However, as we can compute distances between our objects, this is easily overcome by computing the smallest non-zero distance $d$ between two objects in $\cb\oplus B(r)$, and instead running the hitting set algorithm on $\cb\oplus B(r+\alpha)$, where $\alpha$ is some infinitesimal value less than $d/2$. This ensures any objects which intersected in $\cb\oplus B(r)$ now properly intersect, and there are no new intersections. 


\section{Radius Approximation Hardness}
\seclab{hard}
In this section we argue that for \probref{main} it is hard to approximate the radius within any factor, even when $\cbs$ is restricted to being a set of line segments.  Moreover, for the case when $\cbs$ is a set of disks, i.e.\ the case considered in \secref{balls}, we argue the problem is $\mathsf{APX}$-Hard. Our hardness results use a construction similar to the one from \cite{fg-oaac-88}, where they reduce from the problem of planar vertex cover where the maximum degree of a vertex is three, which is known to be $\mathsf{NP}$-complete \cite{vc3}. We denote this problem as $\mathsf{P3VC}$.

\subsection{Line Segments}

Here we argue that it is hard to radius-approximate \probref{main} within any factor, even when $\cbs$ is a set of line segments.
We remark that the following reduction works for any instance of planar vertex cover (i.e.\ regardless of the degree), but the reduction for disks in the next subsection uses that the degree is at most three. 

\begin{theorem}\thmlab{seghard}
\probref{main} cannot in polynomial time be radius-approximated within any factor that is computable in polynomial time unless $\mathsf{P}=\mathsf{NP}$, even when restricting to the set of instances in which $\cbs$ is a set of disjoint line segments.
\end{theorem}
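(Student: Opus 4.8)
The plan is to reduce from $\mathsf{P3VC}$ (planar vertex cover on graphs of maximum degree three), following the framework of~\cite{fg-oaac-88}, but exploiting the fact that in our problem a center need only come \emph{within radius $r$} of each object rather than covering a point exactly. The key idea is to replace each edge of the planar graph $G=(V,E)$ by a gadget built out of line segments, embedded in the plane according to a planar drawing of $G$, so that covering all segments with $k$ balls of radius $0$ (or some tiny radius) is possible if and only if $G$ has a vertex cover of a prescribed size. Since the threshold radius separating ``yes'' from ``no'' instances can be made exactly $0$, any multiplicative radius-approximation — no matter how large the factor $\alpha$ — would have to distinguish radius $0$ from radius $>0$, and hence would solve $\mathsf{P3VC}$ exactly. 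This is the mechanism that upgrades an $\mathsf{NP}$-hardness result into a hardness of approximation \emph{within any factor}.

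\medskip

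Concretely, I would first fix a straight-line planar embedding of $G$ on a polynomial-size integer grid (Fáry/Schnyder-type drawing), placing each vertex $v$ at a grid point $\pnt_v$. For each edge $e=(u,v)$ I would build a short chain of collinear (or nearly collinear) disjoint line segments running along the drawn edge from $\pnt_u$ to $\pnt_v$, with an \emph{odd} number of segments per chain so that a ball centered at one endpoint's vertex location forces a parity/propagation argument down the chain. The intended correspondence is: selecting vertex $v$ into the cover corresponds to placing a center at $\pnt_v$, which simultaneously touches the segments of every edge incident to $v$; the chains are designed so that each edge-chain is fully hit (within radius $0$) precisely when at least one of its two endpoints carries a center. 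Setting $k$ equal to the target vertex-cover size then makes the reduction tight: a vertex cover of size $k$ yields $k$ centers of radius $0$ covering all segments, and conversely any set of $k$ radius-$0$ centers covering all segments must (by the gadget's structure and disjointness) correspond to a vertex cover of size $k$.

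\medskip

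The main technical obstacle is engineering the segment gadgets so that two properties hold simultaneously: (i) \emph{completeness} — a genuine vertex cover induces exactly $k$ centers achieving radius $0$; and (ii) \emph{soundness} — no ``cheating'' configuration of $k$ centers placed at intermediate points can cover all segments with radius $0$ unless it projects back to a legal vertex cover. Soundness is the delicate part: I must ensure that placing a center in the interior of an edge (rather than at a vertex) never lets one center serve two different edges, which is where bounding vertex degree and keeping the drawn edges well-separated matters, and where the disjointness of the segments is used to forbid a single point from lying within radius $0$ of segments belonging to unrelated edges. I would guarantee separation by scaling the grid drawing so that distinct edges (except at shared endpoints) are bounded away from each other by a constant, so that the only radius-$0$ coverage of a segment comes from a center lying \emph{on} that segment, and the gadget's odd-length chain forces these on-segment centers to collapse onto the vertex positions.

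\medskip

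Finally I would verify that the construction is polynomial: the number of segments is $O(|E|)$ times the constant chain length, all coordinates are polynomially bounded, and the whole instance is computable in polynomial time from $G$. I would then phrase the conclusion in the contrapositive: if some polynomial-time algorithm radius-approximated \probref{main} within any polynomial-time-computable factor $\alpha$, then on these instances it would output radius $0$ exactly when $r_{opt}=0$ and a positive value otherwise (since $\alpha \cdot 0 = 0$), thereby deciding $\mathsf{P3VC}$ in polynomial time and forcing $\mathsf{P}=\mathsf{NP}$. The remark preceding the theorem — that the reduction does not rely on the degree-three restriction — I would support by noting that the completeness/soundness argument above only uses planarity and edge-separation, with the degree bound reserved for the subsequent disk construction.
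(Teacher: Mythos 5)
There is a genuine gap, and it is fatal to the mechanism you build the whole proof on: the radius-$0$ threshold. For \emph{disjoint} closed objects, a center is at distance $0$ from a segment if and only if it lies on that segment, and since the segments are pairwise disjoint, a single center can lie on at most one of them. Hence for any instance made of $n$ disjoint segments, $r_{opt}=0$ holds exactly when $k\geq n$ --- a trivially decidable condition that cannot encode $\mathsf{P3VC}$. Your completeness step makes the contradiction explicit: you want a center at the vertex location $\pnt_v$ to ``simultaneously touch the segments of every edge incident to $v$'' at radius $0$, but that requires all those segments to contain the common point $\pnt_v$, i.e., to intersect, violating the hypothesis of the theorem (and of \probref{main}). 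Your soundness discussion (``the only radius-$0$ coverage of a segment comes from a center lying on that segment'') is the same fact, and it implies every segment needs its own center, collapsing the reduction. No choice of gadget can rescue a radius-$0$ yes-instance; the odd-length chain and parity argument you invoke do not help here (in the paper that device belongs to the \emph{disk} reduction of \thmref{hardmain}, not the segment one).

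The paper's proof repairs exactly this point by making the yes-instance radius a small positive $\eps$ rather than $0$. Each edge of the straight-line embedding becomes a \emph{single} segment with an $\eps$-portion removed at each end (so segments of edges sharing a vertex become disjoint), and $k$ is unchanged. A vertex cover of size $k$ then gives $k$ balls of radius $\eps$ covering everything, while if the minimum vertex cover exceeds $k$, any solution restricted to radius $<d/2$ needs more than $k$ centers, where $d$ is the minimum distance between non-adjacent drawn edges. The inapproximability gap is thus $d/(2\eps)$, and since $\eps$ may be chosen as small as desired (with polynomially many bits, using a grid embedding to lower-bound $d$), this gap can be made to exceed any factor $\alpha$ computable in polynomial time --- the quantifier in the theorem is handled by tuning $\eps$ against $\alpha$, not by a universal radius-$0$ argument. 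Your polynomial-size grid embedding, edge-separation scaling, and the observation that the degree bound is not needed for segments are all sound and match the paper; it is only the threshold mechanism that must be replaced as above.
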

\begin{figure}[t]
\centering
\includegraphics[scale=.6]{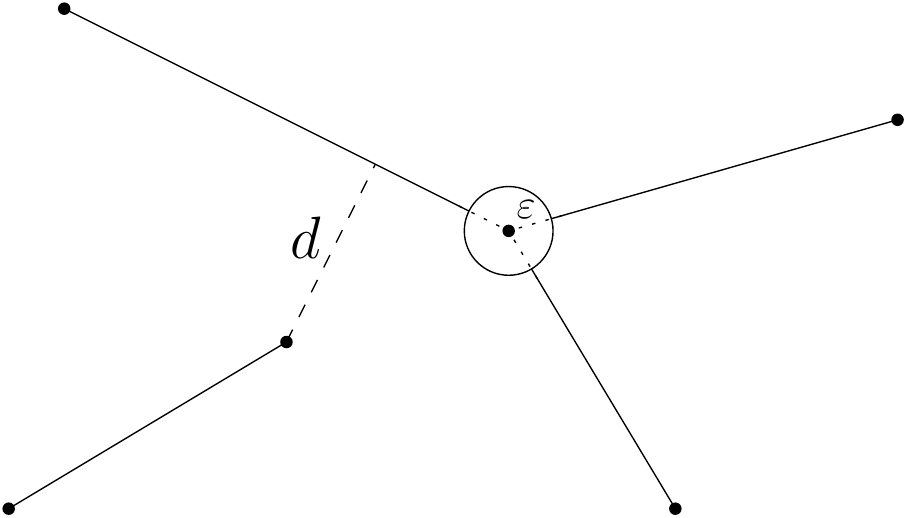}
\caption{Reducing planar Vertex Cover to \probref{main} for segments.}
\figlab{segments}
\end{figure}
\begin{proof}
Let $G,k$ be an instance of $\mathsf{P3VC}$. Consider a straight line embedding of $G$, and let $d$ denote the distance between the closest pair of non-adjacent segment edges.%
\footnote{
In $O(n\log n)$ time one can compute a straight line embedding of $G$ where the vertices are on an $(2n-4) \times (n-2)$ grid \cite{dpp-hdpgg-90}. This implies a lower bound on $d$ with a polynomial number of bits.}
Let $\eps>0$ be a value strictly smaller than $d/2$ and strictly smaller than half the length of any segment edge. The set $\cbs$ of segments in our instance of \probref{main} will be the segment edges from the embedding, but where each segment has an $\eps$ amount removed from each end, i.e.\ we remove all portions of segments in $\eps$ balls around the vertices, see \figref{segments}. We use the same value of $k$ in our \probref{main} instance as in the $\mathsf{P3VC}$ instance.

If there is a vertex cover of size at most $k$ then if we place balls of radius $\eps$ at each of the $k$ corresponding vertices of the embedding, then these balls will intersect all segments in $\cbs$, i.e.\ we have a solution to \probref{main} of radius $\eps$. On the other hand, by the definition of $d$, any ball of radius $<d/2$ cannot simultaneously intersect two segments from $\cbs$ if they correspond to non-adjacent edges from $G$. (Note when we shrunk the edges by $\eps$ this could only have made them further apart.) Thus if the minimum vertex cover requires $>k$ vertices, then our instance of \probref{main} requires $>k$ centers if we limit to balls with radius $<d/2$.  

Therefore, if we could approximate the minimum radius of our \probref{main} instance within any factor less than $d/(2\eps)$ then we can determine whether the corresponding vertex cover instance had a solution with $\leq k$ vertices. However, we are free to make $\eps>0$ as small as we want and thus $d/(2\eps)$ as large as we want, so long as this quantity (or more precisely a lower bound on it) is computable in polynomial time.
\end{proof}

\subsection{Disks}

Here we argue that it is hard to radius-approximate \probref{main} within a constant factor when $\cbs$ is restricted to be a set of unit disks.
The following reduction from $\mathsf{P3VC}$ is similar to the one given in \cite{fg-oaac-88}, which embeds the graph such that edges are replaced by odd length sequences of points. In our case, these odd length sequences of points are instead replaced with odd length sequences of appropriately spaced disks.

\begin{theorem}\thmlab{hardmain}
For the set of instances in which $\cbs$ is a set of disjoint unit disks, 
\probref{main} cannot be radius-approximated to any factor less than $\frac{\sqrt{13} - \sqrt{3}}{2-\sqrt{3}}$  in polynomial time unless $\mathsf{P=NP}$.
\end{theorem}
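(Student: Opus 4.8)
Following the segment case of \thmref{seghard}, my plan is again to reduce from $\mathsf{P3VC}$, but now to replace each edge of the embedded graph $G$ by an odd-length chain of \emph{appropriately spaced unit disks} instead of shrinking a segment, using the degree-$3$ bound to route the (at most three) chains incident to each vertex. As before I first fix a straight-line planar embedding of $G$ on a polynomial-size grid. Then, along each embedded edge $\{u,v\}$, I place a chain of unit disks not collinearly but in a slight \emph{zigzag}, the disk centers alternating between two nearby parallel tracks. The along-chain spacing and the zigzag amplitude are the two design parameters I will tune, subject to the hard disjointness constraint that all disk centers stay strictly more than distance $2$ apart.

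The point of the zigzag, and the source of the final constant, is the following dichotomy. I choose the along-chain spacing so that a center at the midpoint of two \emph{consecutive} disks lies within $1+r_1$ of both disk centers, covering that pair at radius exactly $r_1 = 2-\sqrt{3}$ (so $1+r_1 = 3-\sqrt{3}$). Simultaneously I tune the zigzag amplitude to \emph{maximize} the minimum radius at which any single center can cover three \emph{consecutive} disks: because these three are spread into a wide bent triple rather than a tight cluster, that extremal radius is driven up to $r_2 = \sqrt{13}-\sqrt{3}$, the value $\sqrt{13}$ emerging from the extremal covering configuration of the bent triple once the spacing is fixed by $r_1$ and by disjointness. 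Hence, provided the chains are kept far enough apart in the embedding that no center can cheaply straddle two different chains, any radius-$r_1$ cover must pair up consecutive disks within a chain: a chain of $N$ disks needs at least $\lceil N/2\rceil$ centers, since any attempt to use fewer forces some center onto three consecutive disks (as $2c < N$ when $c < \lceil N/2\rceil$) and hence onto radius $\ge r_2$.

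With this gadget in place the reduction becomes a parity argument for \probref{main}. I set the number of allowed centers to a fixed budget $K$ chosen so that all disks can be covered by $K$ centers, each covering a consecutive pair at radius $r_1$, \emph{if and only if} $G$ has a vertex cover of size at most $k$: the vertex-cover choice resolves, at each degree-$\le 3$ vertex, the parity with which the odd incident chains pair up, and it is exactly here that $\deg(v)\le 3$ is needed so that the local gadget admits a valid radius-$r_1$ pairing. For completeness, a vertex cover of size $\le k$ yields such a perfect pairing, hence a radius-$r_1$ solution with $K$ centers. For soundness, if no vertex cover of size $\le k$ exists, then with only $K$ centers some chain is left short of the centers its pairing requires, so by the counting above some center covers at least three disks and the solution has radius $\ge r_2$. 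Consequently a radius-$r_1$ solution with $K$ centers exists iff $G$ has a vertex cover of size $\le k$, and any polynomial-time radius-approximation with factor strictly less than $r_2/r_1 = \frac{\sqrt{13}-\sqrt{3}}{2-\sqrt{3}}$ would distinguish the two cases and decide $\mathsf{P3VC}$, which is impossible unless $\mathsf{P=NP}$.

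I expect the main obstacle to lie in two coupled places, both inside the gadget analysis. The first is the geometric optimization that pins the two radii: one must show that the spacing and zigzag amplitude can be chosen so that, simultaneously, consecutive disks are covered at radius exactly $2-\sqrt{3}$, every three consecutive disks force radius at least $\sqrt{13}-\sqrt{3}$, and all disks (including those on adjacent chains and at vertices of the embedding) remain more than distance $2$ apart; proving these extremal configurations are the true optima, rather than merely feasible, is what yields the sharp value $\approx 6.99$ instead of a weaker bound. The second is the combinatorial correctness of the degree-$\le 3$ vertex gadget: I must verify that the parity of the odd chains meeting at a vertex is faithfully governed by vertex-cover membership, and that the two-dimensional embedding never introduces an unintended low-radius shortcut — a single center covering disks drawn from two different chains, or a diagonal pairing across a bend — that would let an adversary beat $r_1$ without a genuine vertex cover.
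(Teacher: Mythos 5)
Your high-level architecture matches the paper's (reduce from $\mathsf{P3VC}$, replace edges by odd chains of appropriately spaced unit disks, pair up disks along covered edges, extract a vertex cover in the soundness direction), but the geometric heart of your proposal is wrong, and the actual source of the constant is missing. Your claimed dichotomy --- consecutive pairs coverable at $r_1=2-\sqrt{3}$, three consecutive disks forcing $r_2=\sqrt{13}-\sqrt{3}$ --- cannot hold. Fixing the pair radius $r_1$ fixes the consecutive center spacing at $d=2+2r_1=6-2\sqrt{3}$, and then a ball centered at the \emph{middle} disk's center covers any three consecutive disks at radius $d-1=5-2\sqrt{3}\approx 1.54$, strictly below your $r_2\approx 1.87$, regardless of the zigzag. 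Worse, the zigzag works against you: the covering radius of the triple is the minimum enclosing circle radius of the three centers minus $1$, and for your isoceles configuration (legs $d$, base $2x$ with $x^2+h^2=d^2$) that MEC radius is $x$ in the obtuse regime and $d^2/2h$ in the acute regime, both maximized at amplitude $h=0$; bending makes triples \emph{easier} to cover, not harder. So soundness fails at your constant, and even the best straight-chain version of your triple argument caps out at ratio $(5-2\sqrt{3})/(2-\sqrt{3})=4+\sqrt{3}\approx 5.73$. (Your radii are in fact the paper's radii scaled by $\sqrt{3}$; with unit disks the correct values are $2/\sqrt{3}-1$ and $\sqrt{13/3}-1$.)

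The missing idea is the degree-three \emph{vertex gadget}, which is where both radii actually come from. The paper places three mutually tangent unit disks at each degree-3 vertex; a single center at the center of the equilateral triangle they form covers all three at radius exactly $2/\sqrt{3}-1$, and this is tight --- by the argument of \lemref{arc}, no point can be within distance $(2/\sqrt{3}-1)$ of three \emph{disjoint} unit disks, which is why $r_1$ cannot be pushed lower (and why your plan of shrinking the chain spacing to inflate the ratio is blocked). The chain spacing $2(2/\sqrt{3}-1)$ is then matched to this vertex radius. The binding ``bad pair'' in the soundness direction is not a triple within a chain but a tangent vertex disk paired with the \emph{second} disk of a different edge incident to that vertex, at disk distance $2\sqrt{13/3}-2$ (by the Pythagorean computation $1+(7/\sqrt{3})^2=52/3$); all other non-neighboring pairs are kept more than $2.5$ apart by shallow bends. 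This yields $r_2=\sqrt{13/3}-1$ and the ratio $\frac{\sqrt{13}-\sqrt{3}}{2-\sqrt{3}}$. Finally, the NO-side counting is not ``some center covers three disks'': rather, any ball of radius $<r_2$ covers at most a \emph{neighboring} pair (consecutive on an edge, or two end disks at a shared vertex), and the cross-chain pairs at vertices that you flagged as a risk are deliberately allowed and counted --- letting $z$ balls cover vertex-adjacent pairs and $E_z$ the edges they touch, one extracts a vertex cover of size $z+|E\setminus E_z|$, which is what forces more than $\kappa=k+(|\cbs|-|E|)/2$ centers when no size-$k$ cover exists. Without the tangent-triple gadget and this neighboring-pair classification, your parity argument at degree-3 vertices has no mechanism to be made rigorous.
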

\begin{proof} 
%
To simplify our construction description, instead of requiring the disks be disjoint, we allow them to intersect at their boundaries, but not their interiors. Later we remark how this easily implies the result for the disjoint disk case.

So let $G=(V,E),k$ be an instance of $\mathsf{P3VC}$. For every edge in $E$ we create a sequence of an odd number (greater than 1) of unit disks, where consecutive disks in the sequence are spaced $2(2/\sqrt{3}-1)$ apart from one another. 
(Note $2(2/\sqrt{3}-1)$ is the distance between the disks, not their centers.)
For a vertex $v$ of degree two, we place the disks corresponding to the $v$ end of the adjacent edges again at distance $2(2/\sqrt{3}-1)$ apart. 
For a vertex $v$ of degree three, we place the disks corresponding to the $v$ end of the adjacent edges such that they all just touch one another at their boundaries, see \figref{threecircles}. Thus the centers of these disks form an equilateral triangle, and let the center point of this triangle be $t$. For any one of the adjacent edges, we further require that the centers of the first two disks (on the $v$ end of the edge) lie on a straight line containing $t$, in other words the edges leaving $v$ do not bend until several disks away from $v$. As $G$ is a planar graph with maximum degree three such an embedding of polynomial size is possible, similar to the case in \cite{fg-oaac-88}. Doing so requires using different numbers of disks for each edge and allowing the edges to bend (i.e. the centers of three consecutive disks of an edge may not lie on a line). However, we will require these bends to be gradual. Specifically, observe that if the centers of three consecutive disks of an edge were on a straight line, the distance between the two non-consecutive disks would be $2(1+2(2/\sqrt{3}-1))>2.6$, see \figref{twocircles}. We then require that the bends are shallow enough such that two non-consecutive disks of an edge are more than $2.5$ apart. We also require this for disks from edges adjacent to a degree two vertex (when they are not both the disks immediately adjacent at the vertex), or a degree three vertex when neither disk is one of corresponding three touching disks of the vertex. Finally, for disks that come from edges that are not adjacent, we easily enforce that they are again more than 2.5 apart. (This is similar to the value $d$ from \thmref{seghard}.)

\begin{figure}[t]
\centering
\includegraphics[scale=.5]{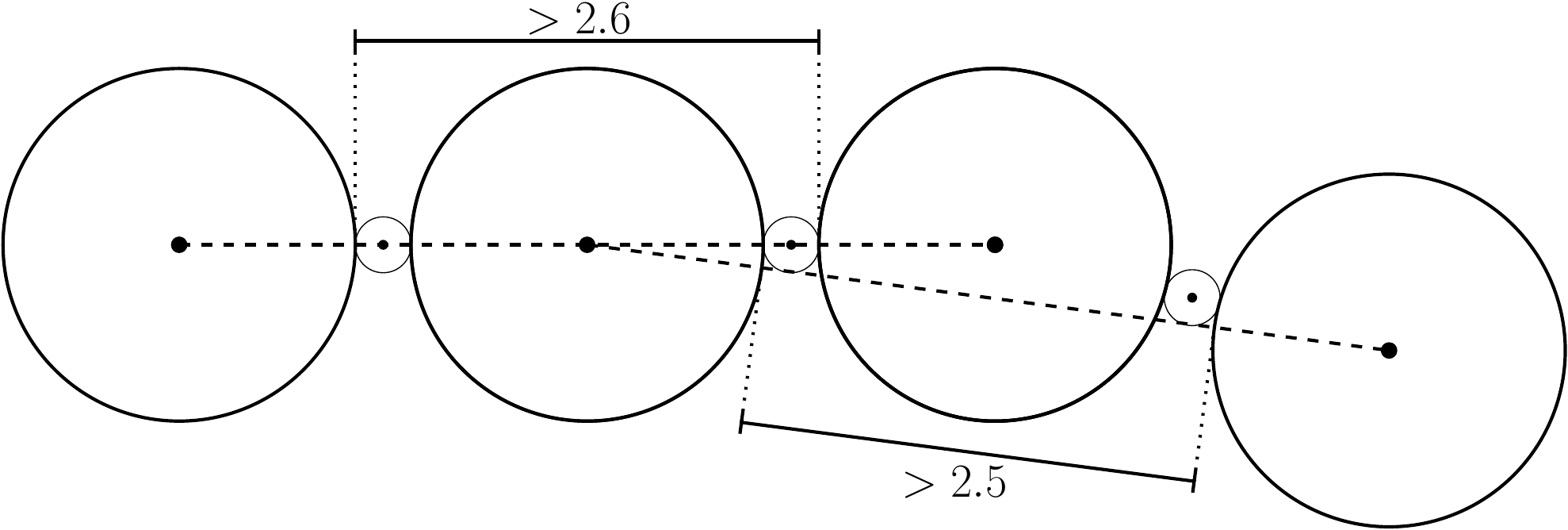}
\caption{Consecutive disks along an edge.}
\figlab{twocircles}
\end{figure}

So given an instance $G,k$ of $\mathsf{P3VC}$, we construct an instance $\cbs, \kappa$ of \probref{main} where $\cbs$ is determined from $G$ as described above and $\kappa=k+(|\cbs|-|E|)/2$. 
We first argue if $G$ has a vertex cover of size $k$ then for our instance of \probref{main} there is a solution of radius $2/\sqrt{3}-1$. First, for any vertex $v$ in the vertex cover we create a center, and roughly speaking place it at the location of $v$ in the embedding. Namely, if $v$ had degree two then we place the center at the midpoint of the centers of the disks at the ends of the edges adjacent to $v$, which by construction are exactly $2(2/\sqrt{3}-1)$ apart and thus a ball at the midpoint with radius $2/\sqrt{3}-1$ intersects both. If $v$ has degree three then we place a center at the center point $t$ of the equilateral triangle determined by three touching disks of the adjacent edges. An easy calculation\footnote{For an equilateral triangle with edge length 2, the distance from an edge to the center point of the triangle is $1/\sqrt{3}$, thus the distance from the center point to any one of the unit balls is $2/\sqrt{3}-1$.} shows that since our disks have unit radius, that $B(t, 2/\sqrt{3}-1)$ intersects the three touching disks. We now cover the remaining disks with $(|\cbs|-|E|)/2$ centers. For any edge $e\in E$ let $n_e$ be the number of disks used for $e$ in the above construction. Observe that as we already placed centers at vertices corresponding to a vertex cover of the edges, at least one disk at the end of each edge is already covered, and so there are at most $n_e-1$ consecutive disks that need to be covered. (Note $n_e-1$ is even.)  However, as consecutive disks are exactly $2(2/\sqrt{3}-1)$ apart on each edge, these $n_e-1$ disks can be covered with $(n_e-1)/2$ balls of radius $(2/\sqrt{3}-1)$ by covering the disks in pairs. Thus the total number of centers used is 
$k+\sum_{e\in E} (n_e-1)/2 = k+(|\cbs|-|E|)/2=\kappa$.
 
\begin{figure}[t]
\centering
\includegraphics[scale=.5]{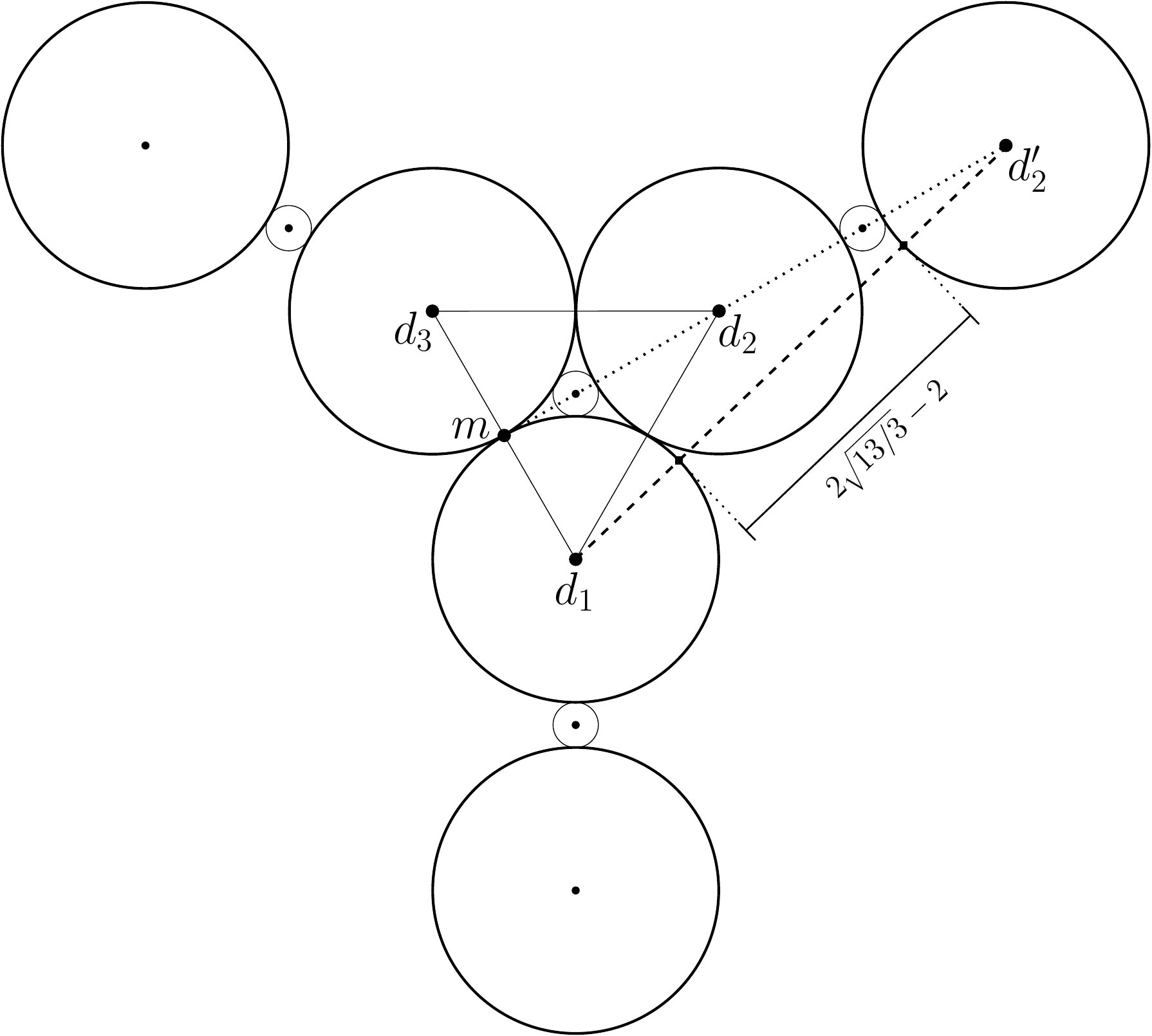}
\caption{Three touching disks corresponding to a degree three vertex.}
\figlab{threecircles}
\end{figure}

Now suppose the minimum vertex cover of $G$ requires $>k$  vertices. In this case we argue that our instance of \probref{main} requires more than $\kappa$ centers if we limit to balls with radius $<\sqrt{13/3}-1$. Call any two disks in $\cbs$ neighboring if they are consecutive on an edge or if they are disks on the $v$ end of two edges adjacent to a vertex $v$. By construction, neighboring disks have distance $\leq 2/\sqrt{3}-1$ from each other. For a pair of disks which are not neighboring we now argue their distance is at least $2\sqrt{13/3} -2$. Specifically, if these disks come from the same edge but are not consecutive along that edge, or if they are from distinct edges that are either non-adjacent or are adjacent to a degree two vertex (but not the two disks of that vertex), then by construction their distance is $>2.5>2\sqrt{13/3} -2$. The remaining case is when the disks are from distinct edges adjacent to a degree three vertex, but they are not both from the three touching disks of the vertex. It is easy to see that the closest two such disks can be is when one of the disks is one of the three touching disks, and the other is the second disk on another edge. We now calculate the distance between two such disks, see \figref{threecircles}. Let the three touching disks be denoted $D_1$, $D_2$, and $D_3$, with centers $d_1$, $d_2$, and $d_3$, respectively. Let $D_2'$ denote the second disk on the edge containing $D_2$, and let its center by $d_2'$. We wish to compute $\distX{D_1}{D_2'} =\distX{d_1}{d_2'}-2$, as these are unit disks. Let $m$ denote the midpoint of $d_1$ and $d_3$, and observe that the line through $d_2$ and $d_2'$ passes through $m$ and is orthogonal to the line through $d_1$ and $d_3$, as the points $d_1$, $d_2$, and $d_3$ form and equilateral triangle. Thus by the Pythagorean theorem we have $\distX{d_1}{d_2'}^2 = 1^2+(1+2(2/\sqrt{3}-1)+1+\sqrt{3})^2 = 1+(4/\sqrt{3}+\sqrt{3})^2=52/3$, where the $+\sqrt{3}$ term is the height of an equilateral triangle of side length $2$. Thus $\distX{D_1}{D_2'} =\distX{d_1}{d_2'}-2 = 2\sqrt{13/3} -2$.

Now we finish the argument that when the minimum vertex cover of $G$ requires $>k$  vertices, our instance of \probref{main} requires more than $\kappa$ centers if we limit to balls with radius $<\sqrt{13/3}-1$. By the above, limiting to radius $<\sqrt{13/3}-1$ implies that any ball either covers just a single disk, or a pair of neighboring disks. An edge $e$ with $n_e$ disks thus requires at least $\lceil n_e/2 \rceil = 1+(n_e-1)/2$ disks to cover it. Moreover, a ball can only cover both a disk of $e$ and $e'$ if those disks are on the $v$ end of two edges adjacent to $v$. Let $E_z$ be the subset of edges with at least one disk covered by such a ball (i.e.\ a ball corresponding to a vertex), and let $z$ be the number of such balls. Then the total number of balls required is 
\begin{align*}
&\geq z+\sum_{e\in E_z} (n_e-1)/2 + \sum_{e\in E\setminus E_z} (1+(n_e-1)/2)\\ 
&= z + (|\cbs|-|E|)/2 + |E\setminus E_z| 
= z+(\kappa-k) + |E\setminus E_z|, 
\end{align*}
which is more than $\kappa$ when $z+|E\setminus E_z|> k$. 
Notice, however, there is a vertex cover of $G$ of size $z+|E\setminus E_z|$, consisting of the vertices that $z$ counted, and one vertex from either end of each edge in $E\setminus E_z$. Thus as the minimum vertex cover has size $>k$, we have $z+|E\setminus E_z|>k$ as desired. 

Therefore, if we could approximate the minimum radius of our \probref{main} instance within any factor less than $\frac{\sqrt{13/3}-1}{2/\sqrt{3}-1} = \frac{\sqrt{13} - \sqrt{3}}{2-\sqrt{3}}$ then we can determine whether the corresponding vertex cover instance had a solution with $\leq k$ vertices. In the above analysis the boundaries of the circles were allowed to intersect, but we can enforce that all disks are disjoint without changing the approximation hardness factor since we showed the problem is hard for any factor that is less than $\frac{\sqrt{13} - \sqrt{3}}{2-\sqrt{3}}$. Specifically, rather than having the disks for a degree three vertex touch, we can instead make them arbitrarily close to touching. 
\end{proof}


\section{Constant Factor Radius Approximation for Disks}
\seclab{balls}

In this section we argue that when $\cbs$ is a set of disjoint disks (of possibly differing radii), that there is a constant factor radius-approximation for \probref{main}. 

\begin{lemma}
\lemlab{arc}
Let $\cbs$ be a set of pairwise disjoint disks such that for all $\cb\in \cbs$, the radius of $\cb$ is $\geq r$. If there is a point $s\in \Re^2$ where $\distX{s}{\cb}\leq (2/\sqrt{3}-1)r$ for all $\cb\in \cbs$, then $|\cbs|\leq 2$.
\end{lemma}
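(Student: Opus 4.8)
The plan is to translate the hypothesis into a statement purely about the disk centers and then derive a contradiction from assuming three disks. Write $\delta = (2/\sqrt{3}-1)r$, and for a disk $\cb_i\in\cbs$ let $c_i$ be its center and $r_i\ge r$ its radius. The first step is to observe that the hypothesis $\distX{s}{\cb_i}\le \delta$ is equivalent to the center bound $\distX{s}{c_i}\le r_i+\delta$; I would verify this holds both when $s$ lies outside $\cb_i$ (where $\distX{s}{\cb_i}=\distX{s}{c_i}-r_i$) and when it lies inside (where the distance is $0$ but trivially $\distX{s}{c_i}\le r_i$). This reduces the lemma to a packing statement: no three points $c_1,c_2,c_3$, each within distance $r_i+\delta$ of $s$, can be the centers of pairwise disjoint disks of radii $r_i\ge r$.

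Next I would set up an angular pigeonhole argument around $s$. Assuming for contradiction that $|\cbs|\ge 3$, I fix three disks. If some center equals $s$, a short calculation rules this out: for a disjoint disk $j$ we have $\distX{s}{c_j}=\distX{c_i}{c_j}>r_i+r_j$, which against the bound $\distX{s}{c_j}\le r_j+\delta$ forces $r_i<\delta$, contradicting $r_i\ge r>\delta$. Hence I may assume all three directions $u_i=(c_i-s)/\distX{s}{c_i}$ are well defined. Three unit vectors split the circle into three arcs summing to $2\pi$, so some pair, say $u_1,u_2$, subtends an angle $\phi\le 2\pi/3$, giving $\cos\phi\ge -1/2$.

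The core computation applies the law of cosines to the triangle with vertices $c_1,s,c_2$. With $a=\distX{s}{c_1}\le r_1+\delta$, $b=\distX{s}{c_2}\le r_2+\delta$, and $\cos\phi\ge -1/2$, I obtain $\distX{c_1}{c_2}^2\le a^2+b^2+ab$, whose right side is increasing in $a$ and $b$ and so is maximized at $a=r_1+\delta$, $b=r_2+\delta$. I would then show this maximum is at most $(r_1+r_2)^2$, which yields $\distX{c_1}{c_2}\le r_1+r_2$ and contradicts the strict disjointness of $\cb_1,\cb_2$. The inequality to verify reduces to $3(r_1+r_2)\delta+3\delta^2\le r_1r_2$; substituting $\delta=(2/\sqrt{3}-1)r$ and using $r_1,r_2\ge r>3\delta$, the left side minus the right is monotone increasing in each $r_i$, so its extremal (worst) case is $r_1=r_2=r$, where it collapses to the exact identity $6(2/\sqrt{3}-1)+3(2/\sqrt{3}-1)^2=1$.

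The main obstacle is precisely this tightness. The bound becomes an equality exactly at the configuration of three pairwise-touching disks of radius $r$ whose centers form an equilateral triangle with $s$ at its center, which is the very configuration used in \thmref{hardmain}. Consequently the lemma yields $\le 2$ rather than $\le 3$ only because disjointness is strict, so I must keep the disjointness inequality strict throughout to genuinely exclude the tight case, and I must handle varying radii, where the monotonicity observation shows that equal radii is the extremal case rather than an assumption. The intuition that each disk is seen from $s$ under a half-angle at least $\pi/3$ (since $\sin\theta_i=r_i/\distX{s}{c_i}\ge r_i/(r_i+\delta)\ge\sqrt{3}/2$) is a helpful sanity check, but as the calculation shows the clean contradiction comes directly from the law of cosines together with the identity above, so I would not rely on it in the final write-up.
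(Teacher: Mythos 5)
Your proof is correct, and its engine is the same as the paper's: assume three disks, pick at $s$ a pair of center directions subtending an angle at most $2\pi/3$, apply the law of cosines with $\cos\phi\geq -1/2$, and contradict strict disjointness of closed disks; the tightness you identify (equality exactly at three mutually touching radius-$r$ disks centered at the vertices of an equilateral triangle with $s$ at its center) is likewise why the paper's chain terminates in the knife-edge contradiction $4r^2 < 4r^2$. Where you genuinely differ is the handling of non-uniform radii: the paper disposes of it in one geometric normalization step, replacing any disk of radius $>r$ by a radius-$r$ subdisk that still meets $B(s,(2/\sqrt{3}-1)r)$ (disjointness is preserved under shrinking), after which every center satisfies $a,b\leq 2r/\sqrt{3}$ and no algebra in $r_1,r_2$ is needed; you instead carry the radii through, reduce to $3(r_1+r_2)\delta+3\delta^2\leq r_1r_2$, and locate the extremal case by monotonicity. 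Your route buys an explicit treatment of the degenerate case $s=c_i$ (which the paper leaves implicit, though there it is harmless: after shrinking, disjointness would force $\distX{s}{c_j}>2r$ while the hypothesis gives $\leq 2r/\sqrt{3}<2r$) and makes visible exactly where equality lives; the paper's shrinking trick buys brevity. One wording slip to fix in your write-up: the quantity $3(r_1+r_2)\delta+3\delta^2-r_1r_2$ is monotone \emph{decreasing} in each $r_i$, not increasing --- its derivative in $r_1$ is $3\delta-r_2\leq 3\delta-r<0$, using precisely your observation that $r>3\delta$ --- and it is this decreasing-ness that places the worst case at $r_1=r_2=r$, where your identity $3(2/\sqrt{3}-1)(2/\sqrt{3}+1)=1$ yields equality, so that strict disjointness delivers the contradiction.
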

\newcommand{\twoproof}{
\begin{proof}
We give a proof by contradiction. So suppose there exists a point $s$ such that there are three disjoint disks in $\cbs$, each with radius $\geq r$, and all of which intersect the ball $B(s,(\frac{2}{\sqrt{3}}-1)r)$. Observe that if any one of these three disjoint disks $\cb$ has radius $>r$, then it can be replaced by a disk $\cb'$ of radius $r$ such that $\cb'\subset \cb$ and $\cb'$ still intersects $B(s,(\frac{2}{\sqrt{3}}-1)r)$. As these new disks are all still disjoint and intersect $B(s,(\frac{2}{\sqrt{3}}-1)r)$, it suffices to argue we get a contradiction when all three disks have radius exactly $r$.  Let the centers of these three disks be denoted $x$, $y$, and $z$. 
Now, at least one of the angles $\angle xsy$, $\angle ysz$, and $\angle zsx$ is $\leq 2\pi/3$. Without loss of generality assume it is $\angle xsy$, and let $\gamma=\angle xsy$.

Consider the triangle $\triangle sxy$, and let its side lengths be denoted $a=\distX{x}{s}, b=\distX{y}{s}, c=\distX{x}{y}$.
Since $\gamma\leq 2\pi/3$, by the Law of Cosines we thus have $c^2 = a^2 + b^2 - 2ab \cos(\gamma) \leq a^2 + b^2 + ab$.
As the $r$ radius disks with centers $x$ and $y$ are disjoint, we know that $2r<c$. Combining these two inequalities we get
$4r^2 < a^2 + b^2 + ab$.
As $B(s,(\frac{2}{\sqrt{3}}-1)r)$ intersects the $r$ radius disks centered at both $x$ and at $y$,
we also have that $a,b\leq (\frac{2}{\sqrt{3}}-1)r + r = \frac{2r}{\sqrt{3}}$. Combining this with the previous inequality gives $4r^2 < a^2 + b^2 + ab\leq 4r^2/3+4r^2/3+4r^2/3 = 4r^2$, which is a clear contradiction and thus the number of disks in $\cbs$ is at most 2.
\end{proof}
}
\twoproof

For any constant $c\geq 1$, we call an algorithm a \emph{$c$-decider} for \probref{main}, if for a given instance with optimal radius $r_{opt}$, and for any given query radius $r$, 
if $r\geq r_{opt}$ then the algorithm returns a solution $S$ of radius $\leq cr$, and if $r< r_{opt}/c$ it returns False (for $r_{opt}/c\leq r<r_{opt}$ either answer is allowed).

\begin{lemma}\lemlab{decider}
 There is an $O(n^{2.5})$ time $(5+2\sqrt{3})$-decider for \probref{main}, when restricted to instances where $\cbs$ is a set of disjoint disks.
\end{lemma}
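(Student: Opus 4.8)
The plan is to run a Hochbaum--Shmoys style ``test radius'' argument, but to split $\cbs$ into large and small disks and cover each class by a different mechanism. Set $c = 5 + 2\sqrt 3$, and for the query radius $r$ call a disk $\cb\in\cbs$ \emph{big} if its radius is at least $(3+2\sqrt 3)r = r/(2/\sqrt3-1)$, and \emph{small} otherwise. The decider will always output a cover of $\cbs$ by balls of radius $\le cr$, returning that solution when it uses at most $k$ balls and returning False otherwise. Since any returned solution has radius $\le cr$ and at most $k$ centers, its existence forces $r_{opt}\le cr$; hence whenever $r<r_{opt}/c$ the algorithm must report False, as required. So the entire content is to show that when $r\ge r_{opt}$ the algorithm produces a cover of radius $\le cr$ using at most $k$ balls.

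First I would fix an optimal solution realizing radius $r_{opt}\le r$, and for each optimal center $o$ call the set of disks meeting $B(o,r_{opt})$ its \emph{cluster}. The structural fact I need is \lemref{arc}: every big disk has radius at least $(3+2\sqrt3)r\ge(3+2\sqrt3)r_{opt}$, and each disk of a cluster meets $B(o,r_{opt})$, so $o$ lies within $r_{opt}$ of every big disk of its cluster; applying \lemref{arc} with threshold radius $(3+2\sqrt3)r_{opt}$ and point $o$ (using $(2/\sqrt3-1)(3+2\sqrt3)=1$) shows each optimal cluster meets at most two big disks. I would also record the elementary bound that any two disks in a common cluster are within $2r_{opt}\le 2r$ of each other, since both meet $B(o,r_{opt})$.

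The algorithm then runs two phases. In the \emph{small phase} I repeatedly take the uncovered small disk $\cb$ of minimum radius, place one ball centered at the center of $\cb$ with radius $\rho_\cb+2r$ (where $\rho_\cb$ is the radius of $\cb$), and mark covered every disk within distance $2r$ of $\cb$. A short calculation shows this ball meets every disk within $2r$ of $\cb$, and its radius is $<(3+2\sqrt3)r+2r=cr$. In the \emph{big phase} I build the graph on the remaining (necessarily big) disks joining two of them when their distance is at most $2r$, compute a maximum matching $M$, cover each matched pair by a single radius-$r$ ball placed within $r$ of both, and cover each unmatched big disk by a degenerate ball touching it. All balls have radius $\le cr$, and this matching step dominates the running time: a graph on $O(n)$ vertices and $O(n^2)$ edges gives the stated $O(n^{2.5})$ bound.

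The crux, and the step I expect to be most delicate, is the charging argument bounding the total number of balls by $k$. Each small-phase round covers the whole cluster of its chosen disk (by the $2r$ bound), so distinct rounds are charged to distinct optimal clusters; let $A$ be this family, so the small phase uses exactly $|A|$ balls. Every big disk remaining after the small phase must lie in a cluster outside $A$, since a round charged to its cluster would already have covered it. By \lemref{arc} each optimal cluster contains at most two big disks, so the remaining big disks are partitioned into parts of size at most two by their clusters, and hence a maximum matching covers them using at most $|D|$ balls, where $D$ is the set of clusters outside $A$ still holding a remaining big disk. As $A$ and $D$ are disjoint families of optimal clusters, $|A|+|D|$ is at most the number of clusters, which is at most $k$. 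I would take care that small-phase rounds may incidentally cover disks of other clusters (which only helps the count), and that a matched pair is jointly coverable by a radius-$r$ ball precisely because their $r$-expansions intersect when their distance is at most $2r$.
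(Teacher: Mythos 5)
Your proof is correct, and it shares the paper's skeleton --- the same size threshold $(3+2\sqrt{3})r$ separating small from big disks, the same use of \lemref{arc} to cap each optimal center at two big disks, and the same $O(n^{2.5})$ maximum-matching bottleneck --- but the crux, the counting argument, is proved by a genuinely different and arguably cleaner charging scheme. The paper's greedy places a ball $B(t_i,(5+2\sqrt{3})r)$ at each chosen small-disk center and removes \emph{everything} that ball intersects; since a removed disk may then lie far from any optimal cluster of the chosen disk, the paper charges rounds to optimal centers via a containment argument, tracking which optimal $r$-balls $B(s,r)$ become contained in the growing union $U_i$ of placed balls. Your more conservative removal rule --- mark covered only the disks within $2r$ of the chosen disk --- loses nothing in the count but makes the charging local: every cluster-mate of the chosen disk lies within $2r_{opt}\leq 2r$ of it, so each round swallows every cluster containing its disk, and distinct rounds charge distinct optimal centers immediately, with no containment bookkeeping. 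Your big phase is the paper's minimum edge cover in disguise: maximum matching $M$ plus degenerate balls for unmatched vertices uses $b-|M|$ balls (where $b$ is the number of surviving big disks), which by Gallai's identity equals the paper's $|S_2|+|\mathcal{E}|$. Two small points to tighten: a disk may meet several optimal balls, so ``its cluster'' should be ``some cluster containing it'' --- charge each small-phase round to one optimal center covering its chosen disk, and assign each surviving big disk to one covering center (all such centers are necessarily uncharged, and each holds at most two big disks by \lemref{arc}, so the induced size-two parts form a matching $M'$ with $b-|M|\leq b-|M'|=|D|$, since cluster-mates are within $2r$ and hence adjacent in your graph); and your minimum-radius selection rule in the small phase is never used in the analysis --- an arbitrary uncovered small disk suffices, exactly as in the paper.
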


\begin{proof}
Let $r$ be the given query radius. 
We build a set $S$ of centers as follows, where initially $S=\emptyset$.
Let $P$ be the set of center points of all disks in $\cbs$ with radius $<(3+2\sqrt{3}) r$. 
Until $P$ is empty repeatedly add an arbitrary point $p\in P$ to the set $S$, remove all disks from $\cbs$ which intersect $B(p,(5+2\sqrt{3})r)$, and remove all center points from $P$ corresponding to disks removed from $\cbs$. Let $S_1$ refer to the resulting set of centers.
For the remaining set of disks $\cbs'$, define the subset $\cbs'' = \{\cb\in \cbs' \mid \exists~ D\in \cbs'\setminus\{\cb\} \text{ s.t. } \distX{\cb}{D}\leq 2r\}$. First, for every disk $\cb$ in $\cbs'\setminus \cbs''$ we add an arbitrary point from $\cb$ to $S$. Let this set of added centers be denoted $S_2$. Now for the set $\cbs''$ we construct a graph $G=(V,E)$ where $V=\cbs''$ and there is an edge from $\cb$ to $D$ if and only if $\distX{\cb}{D}\leq 2r$. Let $\mathcal{E}$ be a minimum edge cover of $G$. 
(Note every vertex in $G$ has an adjacent edge by the definition of $\cbs''$ and thus $\mathcal{E}$ exists.)
For every edge $(\cb,D)\in \mathcal{E}$, $\distX{\cb}{D}\leq 2r$ and thus there is a point $p\in \Re^2$ such that $\distX{p}{\cb}, \distX{p}{D}\leq r$. So finally, for each $(\cb,D)\in \mathcal{E}$ we add this corresponding point $p$ to $S$. Let this final set of added centers be denoted $S_3$. If $|S|\leq k$ we return $S$ (which is the disjoint union of $S_1$, $S_2$, and $S_3$) and otherwise we return False.

To prove the above algorithm is a $(5+2\sqrt{3})$-decider, first we argue that if $r<r_{opt}/(5+2\sqrt{3})$ then it returns False. To do so we prove the contrapositive. So assume $|S|\leq k$. Let $S_1$, $S_2$, and $S_3$, and $\cbs'' \subseteq \cbs' \subseteq \cbs$ be as defined above. As we used balls of radius $(5+2\sqrt{3})r$, all $\cb\in \cbs\setminus \cbs'$ are within distance $(5+2\sqrt{3})r$ of points in $S_1$. All $\cb \in \cbs'\setminus \cbs''$ have distance zero to a point in $S_2$. Finally, all $\cb \in \cbs''$ have distance $\leq r$ to a point in $S_3$. 
As $S$ is the disjoint union of $S_1$, $S_2$, and $S_3$, we thus have that all $\cb\in \cbs$ are within distance $(5+2\sqrt{3})r$ to a set $S$ with $\leq k$ points, which by the definition of \probref{main} means that $r_{opt}\leq (5+2\sqrt{3})r$.

Now suppose $r\geq r_{opt}$, where $r_{opt}$ is the optimal radius for the given instance $\cbs, k$ of \probref{main}.
In order to prove the algorithm is a $(5+2\sqrt{3})$-decider, in this case we must argue it returns a $\leq (5+2\sqrt{3})r$ radius solution. As already shown above, if the algorithm returns a solution then it has radius $\leq (5+2\sqrt{3})r$, thus all we must argue is that a solution is returned, namely that $|S|\leq k$. So fix an optimal solution $S^*$ for the original input instance $\cbs, k$. 
We argue that there are disjoint subsets $S^*_1$, $S^*_2$, and $S^*_3$ of $S^*$ such that $|S^*_1|\geq |S_1|$, $|S^*_2|\geq |S_2|$, and $|S^*_3|\geq |S_3|$, and therefore $|S|\leq |S^*|=k$.

Let the points in $S_1=\{t_1,\ldots, t_{|S_1|}\}$ be indexed in the order they were selected. Consider the point $t_i$, which is the center of some disk $\cb_i\in \cbs$ with radius $\leq (3+2\sqrt{3})r$.
Let $U_i = \cup_{j\leq i} B(t_j,(5+2\sqrt{3})r)$.
Define $S^*_1$ as the centers $s$ of $S^*$ such that $B(s,r)\subseteq U_{|S_1|}$.
To argue $|S_1^*|\geq |S_1|$, it suffices to argue that for all $i$ there exists some $s\in S^*$ such that $B(s,r)\not \subseteq U_{i-1}$ while $B(s,r)\subseteq U_{i}$ (i.e.\ $s$ gets charged uniquely to $t_i$).
Now there must be some center $s\in S^*$ such that $B(s,r_{opt})\cap \cb_i\neq \emptyset$, as $S^*$ covers $\cbs$ with radius $r_{opt}$. Moreover, since $r\geq r_{opt}$, we have $B(s,r)\not \subseteq U_{i-1}$, since otherwise it implies $U_{i-1}\cap \cb_i\neq \emptyset$ and thus $t_i$ could not have been selected in the $i$th round as the algorithm had already removed it from $P$. Conversely, $B(s,r)\subseteq U_{i}$, since $B(s,r)$ intersects $\cb_i$ and $\cb_i$ has radius $\leq (3+2\sqrt{3})r$, and thus $B(s,r)\subseteq B(t_i,(5+2\sqrt{3})r)\subseteq U_{i}$. Therefore $|S_1^*|\geq |S_1|$.

For any $s\in S^*_1$, $B(s,r)\subseteq U_{|S_1|}$, and since the disks of $\cbs'$ do not intersect $U_{|S_1|}$, in the optimal solution $\cbs'$ must be $r_{opt}$-covered only using centers from $S^*\setminus S^*_1$.
Let $S^*_2$ be the subset of centers from $S^*\setminus S^*_1$ which $r_{opt}$ covers $\cbs'\setminus \cbs''$.
Since any disk $\cb\in (\cbs'\setminus \cbs'')$  has distance $>2r$ to its nearest neighbor in $\cbs'\setminus \{\cb\}$ and $r_{opt}\leq r$, the optimal solution must use a distinct center to cover each disk in $\cbs'\setminus \cbs''$, i.e.\ $|S^*_2|\geq |S_2|$, and moreover, $\cbs''$ must be covered in the optimal solution by $S^*\setminus (S^*_1\cup S^*_2)$. So finally, let $S^*_3$ be the subset of centers from $S^*\setminus (S^*_1\cup S^*_2)$ which $r_{opt}$ covers $\cbs''$. By construction, the radius of each $\cb\in \cbs''$ is $\geq (3+2\sqrt{3})r$. Thus, by \lemref{arc} any point from $S^*_3$ can $(2/\sqrt{3}-1)\cdot (3+2\sqrt{3})r = r\geq r_{opt}$ cover at most 2 disks from $\cbs''$. Now the graph $G$, for which our algorithm computes a minimum edge cover $\mathcal{E}$, contains an edge for every pair of disks which can be simultaneously covered with a single $r$ radius ball. Therefore $|S^*_3|\geq |\mathcal{E}| = |S_3|$.

For the running time, computing the set $P$ takes $O(n)$ time. Selecting a new point $p\in P$ and removing all disks from $\cbs$ which intersect $B(p,(5+2\sqrt{3})r)$ can be done in $O(n)$ time, and thus repeating this till $P$ is empty takes $O(n^2)$ time. Determining the subset $\cbs''$, and hence the graph $G$, can naively be done in $O(n^2)$ by checking the distances between all pairs in $\cbs'$. Selecting a point from each $\cb\in (\cbs'\setminus \cbs'')$ takes $O(n)$ time. Finally, since computing a minimum edge cover can be reduced to computing a maximum matching, $\mathcal{E}$ can be found in $O(n^{2.5})$ time (see \cite{mv-afmmgg-80}). 
\end{proof}

We remark that it should be possible to improve the running time of the above decision procedure, by arguing that the graph $G$ it constructs is sparse. However, ultimately that will not improve the running time of the following optimization procedure, as it searches over the $O(n^3)$ sized set of \lemref{canon}.

\begin{theorem}\thmlab{radiusapprox}
 There is an $O(n^3 \log n)$ time $(5+2\sqrt{3})$-radius-approximation algorithm for \probref{main}, when restricted to instances where $\cbs$ is a set of disjoint disks.
\end{theorem}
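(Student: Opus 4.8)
The plan is to combine the $(5+2\sqrt{3})$-decider of \lemref{decider} with the canonical radius set $R$ from \lemref{canon}, searching over $R$ to find the correct optimal radius. The key observation is that \lemref{canon} guarantees $r_{opt}\in R$ where $|R|=O(n^3)$, so we need not search a continuous range of radii; it suffices to test the $O(n^3)$ candidate values in $R$. For each candidate radius, the decider of \lemref{decider} tells us (approximately) whether $k$ centers suffice, and when they do it hands back an explicit solution of radius at most $(5+2\sqrt{3})$ times the query radius.

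First I would sort the set $R$ of $O(n^3)$ candidate radii, which takes $O(n^3\log n)$ time, matching the budget for computing $R$ via \lemref{canon}. Then I would binary search over the sorted radii: for a query radius $r\in R$, run the $O(n^{2.5})$-time decider. By the definition of a $c$-decider with $c = 5+2\sqrt{3}$, if $r\geq r_{opt}$ the decider returns a solution of radius $\leq cr$ using $\leq k$ centers, and if $r < r_{opt}/c$ it returns False; in the intermediate range either outcome may occur. The monotonicity that justifies binary search is that the decider returning a valid solution is (essentially) monotone in $r$, so we can locate the smallest radius $r^\star\in R$ for which the decider succeeds. Since $r_{opt}\in R$ and the decider must succeed at $r_{opt}$ (as $r_{opt}\geq r_{opt}$), we have $r^\star\leq r_{opt}$, and the returned solution has radius $\leq c\,r^\star\leq c\,r_{opt} = (5+2\sqrt{3})r_{opt}$, exactly the claimed approximation.

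For the running time, the binary search performs $O(\log(n^3)) = O(\log n)$ invocations of the decider, each costing $O(n^{2.5})$, for a total of $O(n^{2.5}\log n)$ in the search phase. This is dominated by the $O(n^3\log n)$ cost of computing and sorting $R$ via \lemref{canon}, giving the overall $O(n^3\log n)$ bound stated in the theorem.

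The one subtlety I would be careful about is that a $c$-decider is only guaranteed to behave cleanly outside the ambiguous window $[r_{opt}/c, r_{opt})$, where either answer is permitted, so the decider's responses need not be perfectly monotone across $R$. This is harmless for correctness: I would simply return the solution associated with the smallest $r\in R$ at which the decider returns a valid $\leq k$-center solution (if binary search is used, one can fall back on the standard guarantee that any $r\in R$ with $r\geq r_{opt}$ yields success). The essential point is that testing $r_{opt}$ itself always succeeds and yields radius $\leq c\,r_{opt}$, so the minimum over successful radii in $R$ can only do better. Hence the main obstacle is not an algorithmic difficulty but rather stating the search correctly so that the ambiguous window of the decider does not break the approximation guarantee, which it does not since we only ever compare against the true value $r_{opt}\in R$.
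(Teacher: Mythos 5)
Your proposal is correct and follows essentially the same route as the paper: compute and sort the canonical set $R$ of \lemref{canon}, binary search over it with the $(5+2\sqrt{3})$-decider of \lemref{decider}, and charge the $O(n^{2.5}\log n)$ search cost against the dominant $O(n^3\log n)$ cost of building $R$. One small imprecision: binary search over a non-monotone sequence of decider answers does not locate the \emph{smallest} successful radius as you claim, but this is immaterial for the stated bound --- as in the paper, it suffices that the search terminates at a consecutive pair $r'<r$ in $R$ with $\mathsf{decider}(r')$ False and $\mathsf{decider}(r)$ True, since a False answer certifies $r'<r_{opt}$, and $r_{opt}\in R$ then forces $r\leq r_{opt}$, so the returned solution has radius at most $(5+2\sqrt{3})r\leq(5+2\sqrt{3})r_{opt}$.
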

\begin{proof}
 By \lemref{canon}, in $O(n^3\log n)$ time we can compute an $O(n^3)$ sized set $R$ of values, such that $r_{opt}\in R$, where $r_{opt}$ is the optimal radius. So sort the values in $R$, and then binary search over them using the $(5+2\sqrt{3})$-decider of \lemref{decider}, which we denote $\mathsf{decider}(r)$. Specifically, if $\mathsf{decider}$ returns False we recurse to the right, and if it returns a solution (i.e.\ True) then we recurse on the left. Note that since our decision procedure is approximate, the values for which it returns True or for which it returns False may not be contiguous in the sorted order of $R$. Regardless, however, our binary search allows us to find a pair $r'<r$ which are consecutive in $R$ and such that $\mathsf{decider}(r')$ is False, and $\mathsf{decider}(r)$ is True. (Unless $\mathsf{decider}$ always returns True, in which case it returns the smallest value in $R$.) 
 By \lemref{decider} $\mathsf{decider}$ is a $(5+2\sqrt{3})$-decider, and thus since $\mathsf{decider}(r')$ is False by definition we have that  $r'<r_{opt}$. However, as $r'<r$ are consecutive in the sorted order of $R$ and since $r_{opt}\in R$, this implies $r_{opt}\geq r$. On the other hand, again by the definition of a $(5+2\sqrt{3})$-decider, $\mathsf{decider}(r)$ outputs a solution with radius at most $(5+2\sqrt{3})r\leq (5+2\sqrt{3})r_{opt}$, thus giving us a $(5+2\sqrt{3})$-approximation as claimed.
 
 By \lemref{canon}, computing and sorting the $O(n^3)$ values in $R$ takes $O(n^3\log n)$ time. 
  By \lemref{decider} each call to $\mathsf{decider}$ takes $O(n^{2.5})$ time, and since we are binary searching over $O(n^3)$ values, the time for all calls to $\mathsf{decider}$ is $O(n^{2.5}\log( n^{3}))=O(n^{2.5}\log n)$.
  Thus the total time is $O(n^3\log n)$ as claimed.
\end{proof}

Our focus in this paper is on the planar case, however, in \apndref{extension} we remark how the above decision procedure works in higher dimensions. 
The above optimization procedure does not immediately extend as it makes use of \lemref{canon}, however, in the appendix we informally sketch how one can approximately recover the same result.

\section{An Efficient FPTAS for Bounded k}
\seclab{bounded}

By \lemref{canon}, we can compute a set of $O(n^{3})$ points which contains a subset of size $k$ that is an optimal $k$-center solution.
Thus, for $k$ is constant, enumerating all $O(n^{3k})$ possible subsets, and taking the minimum cost solution found, yields a polynomial time algorithm. In this section, we argue that for constant $k$, we can achieve a $(1+\eps)$-radius-approximation for unit disks, whose running time depends only linearly on $n$.
Contrast this with \thmref{hardmain}, where we argued that when $k$ is not assumed to be constant, that the problem is hard to approximate for unit disks within a given constant factor.

We use the following from
Agarwal and Procopiuc \cite{ap-eaac-02}.


\newcommand{\kceneps}{\Algorithm{kCenter}\xspace}
\begin{theorem}[\cite{ap-eaac-02}]\thmlab{ptas}
 Given a set $P$ of $n$ points in the plane, there is an $O(n\log k)+(k/\eps)^{O(\sqrt{k})}$ time  $(1+\eps)$-radius-approximation algorithm for $k$-center, denoted $\kceneps(\eps,P)$. 
\end{theorem}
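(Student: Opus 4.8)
Since this statement is quoted from \cite{ap-eaac-02}, the plan is to reconstruct the two ingredients behind it: a fast reduction to a tiny instance, followed by an exact search whose cost is subexponential in $k$. The overall strategy is to separate the two terms of the running time. The $O(n\log k)$ term should pay only for preprocessing that shrinks the input to a point set of size $m=(k/\eps)^{O(1)}$, while the $(k/\eps)^{O(\sqrt k)}$ term is entirely the cost of solving $k$-center \emph{exactly} on that reduced instance.

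First I would compute, in $O(n\log k)$ time, a value $r$ with $r_{opt}\le r\le c\,r_{opt}$ for an absolute constant $c$; a grid-accelerated variant of the greedy rule of Gonzalez \cite{g-cmmid-85} achieves this. Next I would impose an axis-parallel grid of side length $\Theta(\eps r)$ and keep one representative point from each non-empty cell. Snapping every point to its cell representative perturbs all pairwise distances by only $O(\eps r)=O(\eps\,r_{opt})$, so an optimal $k$-center solution of the representatives has radius within a $(1+O(\eps))$ factor of $r_{opt}$, and conversely. The key size bound is that in any optimal solution every input point lies in one of $k$ balls of radius $r_{opt}\le r$, each of which meets only $O(1/\eps^2)$ grid cells; hence there are at most $m=O(k/\eps^2)$ non-empty cells, and therefore at most $m$ representatives. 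The snapping and bucketing is done with a hash grid in $O(n)$ additional time.

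It then remains to solve $k$-center exactly on the $m=(k/\eps)^{O(1)}$ representatives. I would first restrict candidate centers to a discrete set of size $\mathrm{poly}(m)$: each optimal disk can be shrunk and translated until its boundary is pinned by two or three representatives, so circumcenters of pairs and triples suffice, in the same spirit as \lemref{canon}. The solver is then a balanced divide-and-conquer governed by a separator. At scale $\Theta(r_{opt})$ there is an axis-parallel line that leaves at most $\tfrac23 k$ of the optimal disks strictly on each side while meeting only $O(\sqrt k)$ of them, which is exactly the planar ($d=2$) case of the $k^{1-1/d}$ balanced-separator bound underlying the $n^{O(k^{1-1/d})}$ exact algorithm of \cite{ap-eaac-02}. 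One guesses this line together with the $O(\sqrt k)$ crossed disks, choosing each of their centers from the candidate set, after which the two sides become independent subinstances. The resulting recurrence $T(k)\le \mathrm{poly}(m)^{O(\sqrt k)}\,T(2k/3)^2$ solves to $m^{O(\sqrt k)}=(k/\eps)^{O(\sqrt k)}$, and adding the preprocessing gives the claimed bound.

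The main obstacle is precisely this separator step: proving that a single balanced cut need cross only $O(\sqrt k)$ optimal disks, and that fixing the crossed disks makes the two sides genuinely independent (no optimal disk is forced to serve points on both sides). The crossing bound is delicate because the optimal disks may overlap heavily, so one must first argue the solution can be normalized into a bounded-overlap, grid-aligned form before invoking the planar separator estimate; this normalization, together with the averaging/shift argument that locates a good cut, is the crux that produces the $\sqrt k$ exponent. By comparison the coreset reduction is routine once the constant-factor radius is in hand, and a final constant-factor rescaling of $\eps$ absorbs the additive $O(\eps\,r_{opt})$ error introduced by snapping.
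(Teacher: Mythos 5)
Your reconstruction targets the right source: the paper itself does not prove this theorem but imports it verbatim from \cite{ap-eaac-02}, and your three-stage plan --- an $O(n\log k)$-time constant-factor radius estimate via gridded Gonzalez (i.e.\ Feder--Greene), snapping to a grid of side $\Theta(\eps r)$ to obtain a coreset of $m=O(k/\eps^2)$ representatives, and then an exact solver on the small instance via the Hwang--Lee--Chang slab-dividing idea that Agarwal and Procopiuc generalize to the $k^{1-1/d}$ exponent --- is exactly the architecture of their proof. The coreset step, the $O(m^3)$ candidate-center discretization, and the claim that guessing the $O(\sqrt{k})$ crossed disks decouples the two sides are all faithful to the source.

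There is, however, a genuine error in the recurrence. The two sides of the separator are \emph{independent} subproblems, so for each guess their costs \emph{add}: the correct recurrence is $T(k)\le m^{O(\sqrt{k})}\bigl(T(k_1)+T(k_2)\bigr)$ with $k_1,k_2\le 2k/3$, hence $T(k)\le m^{O(\sqrt{k})}\cdot 2\,T(2k/3)$ by monotonicity, which telescopes to $m^{O(\sqrt{k})}$ because the per-level exponents $\sqrt{k}\,(2/3)^{i/2}$ decay geometrically while the branching contributes only $2^{O(\log k)}=\mathrm{poly}(k)$. Your multiplicative version $T(k)\le \mathrm{poly}(m)^{O(\sqrt{k})}\,T(2k/3)^2$ does \emph{not} solve to $m^{O(\sqrt{k})}$: unrolling gives $\log T(k)=\Theta\bigl(\sqrt{k}\,\log m\cdot\sum_i (2\sqrt{2/3})^i\bigr)$, and since $2\sqrt{2/3}>1$ this sums to $m^{k^{c}}$ with $c\approx 1.7$, worse even than $m^{O(k)}$. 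Separately, the step you flag as the crux --- normalizing heavily-overlapping optimal disks before the crossing bound --- is not needed and not how the source proceeds: the exact algorithm is run on the \emph{decision} problem at a fixed candidate radius $r$, so all disks are congruent, and the $O(\sqrt{k})$ crossing bound is a short packing argument (each disk of radius $r$ meets at most two vertical lines spaced $2r$ apart, so among $O(\sqrt{k})$ consecutive such lines some line crosses only $O(\sqrt{k})$ of the $k$ disks, and a balanced line is found near the median); independence across the cut is immediate once the crossing disks are guessed. With the additive recurrence substituted, your argument matches the cited proof and yields the stated $O(n\log k)+(k/\eps)^{O(\sqrt{k})}$ bound.
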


\begin{theorem}
\thmlab{newptas}
 There is an $O(n\log k)+(k/\eps)^{O(k)}$ time $(1+\eps)$-radius-approximation algorithm for \probref{main}, when restricted to instances where $\cbs$ is a set of  disjoint unit disks.
\end{theorem}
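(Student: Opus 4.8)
The plan is to reduce to ordinary $k$-center on the disk centers and invoke $\kceneps$ from \thmref{ptas}. Let $C=\{c_1,\dots,c_n\}$ be the centers of the unit disks of $\cbs$. Because each disk has radius $1$, for any point $s$ we have $\distX{s}{\cb_i}=\max(0,\|s-c_i\|-1)$; hence a set $S$ of $k$ points is a solution of radius $r$ for \probref{main} exactly when every $c_i$ lies within distance $r+1$ of $S$. Writing $\rho_{opt}$ for the optimal ordinary $k$-center radius of $C$, a short monotonicity argument then gives $r_{opt}=\max(0,\rho_{opt}-1)$, and any $k$-center solution for $C$ of radius $\rho$ is, verbatim, a solution of \probref{main} of radius $\max(0,\rho-1)$.

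The difficulty is that a $(1+\eps')$-approximation to $\rho_{opt}$ only yields our-radius at most $(1+\eps')\rho_{opt}-1=(1+\eps')r_{opt}+\eps'$, whose additive $+\eps'$ term destroys the multiplicative guarantee when $r_{opt}$ is tiny; and we cannot compensate by taking $\eps'=\Theta(\eps\, r_{opt})$, since $r_{opt}$ may be arbitrarily small, which would blow up the running time of \thmref{ptas}. This is the main obstacle. I would resolve it with a packing argument: if $r_{opt}<1$, then in an optimal clustering the centers of the disks assigned to any single center lie in a disk of radius $r_{opt}+1<2$ while being pairwise more than $2$ apart (as the unit disks are disjoint), so a standard area bound shows each cluster holds at most some constant $\gamma$ disks. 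Consequently $r_{opt}<1$ forces $n\le\gamma k$.

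This dichotomy drives the algorithm. If $n>\gamma k$, then $r_{opt}\ge 1$ and $\rho_{opt}=r_{opt}+1\ge 2$, so I would run $\kceneps(\eps/2,C)$ and return its $k$ centers. By the reduction the resulting radius is at most $(1+\eps/2)\rho_{opt}-1=(1+\eps/2)r_{opt}+\eps/2$, and since $r_{opt}\ge 1$ gives $\eps/2\le(\eps/2)r_{opt}$, this is at most $(1+\eps)r_{opt}$. The running time is $O(n\log k)+(2k/\eps)^{O(\sqrt k)}=O(n\log k)+(k/\eps)^{O(\sqrt k)}$.

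If instead $n\le\gamma k$, then $n=O(k)$ and I can afford an exact brute force: by \lemref{canon} there is a point set $P$ of size $O(n^3)=O(k^3)$ containing an optimal set of $k$ centers, so enumerating all $\binom{O(k^3)}{k}=k^{O(k)}$ size-$k$ subsets and evaluating the radius of each in $O(nk)=O(k^2)$ time returns an exact optimum in $k^{O(k)}$ time. Both branches fall within the claimed $O(n\log k)+(k/\eps)^{O(k)}$ bound. The only subtle point is the interaction between the additive shift and the case split; once the packing bound certifies that a small optimal radius forces a small instance, each case is routine.
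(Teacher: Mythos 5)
Your proof is correct and follows essentially the same route as the paper: reduce to point $k$-center on the disk centers, split on whether $r_{opt}$ is small or large via a packing bound relating $n$ and $k$, brute-force over the canonical set of \lemref{canon} in the small case, and invoke \thmref{ptas} in the large case. Your one refinement is that the exact identity $\distX{s}{\cb}=\max(0,\|s-c\|-1)$ shrinks the additive slack from $1$ to $\eps'$, letting you split at $r_{opt}\ge 1$ (so $n\le \gamma k$ and the brute force runs in $k^{O(k)}$ time) where the paper splits at $r_{opt}>2/\eps$ (so $n=O(k/\eps^2)$ and the brute force is $(k/\eps)^{O(k)}$) --- a minor quantitative sharpening of the same argument, not a different one.
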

\begin{proof}
 Let $P$ denote the set of center points of the disks in $\cbs$. 
 For any given set $S$ of $k$ points in the plane, let $r_P(S) = \max_{p\in P} \distX{p}{S}$ and $r_\cbs(S) = \max_{\cb\in \cbs} \distX{\cb}{S}$.  
 Observe that $r_\cbs(S) \leq r_P(S)\leq r_\cbs(S)+1$. Specifically, $r_\cbs(S)\leq r_P(S)$ since any ball (in particular one centered at a point from $S$) which contains a center point from $P$ also intersects the corresponding disk in $\cbs$. On the other hand, $r_P(S)\leq r_\cbs(S)+1$ since for any ball intersecting a disk in $\cbs$, if we increase its radius by $1$ then it will contain the center point of that disk, as $\cbs$ consists of unit disks. 
 
Let $r_{opt}$ denote the optimum radius for the given instance $\cbs,k$ of \probref{main}. We consider two cases based on the value of $r_{opt}$. First, suppose that $r_{opt}>2/\eps$. Let $S'$ denote the solution returned by $\kceneps(\eps/3,P)$. By the above inequalities and \thmref{ptas},  
 \begin{align*}
 &r_\cbs(S') \leq r_P(S') 
 \leq (1+\eps/3) \min_{S\subset \Re^2, |S|=k} r_P(S)
 \leq (1+\eps/3) (1+\min_{S\subset \Re^2, |S|=k} r_\cbs(S))\\ 
 &\!= (1+\eps/3)(1+r_{opt})
 < (1+\eps/3)(\eps r_{opt}/2+r_{opt}) 
 = (1+\eps/3)(1+\eps/2) r_{opt}
 \leq (1+\eps)r_{opt},
 \end{align*}
 where the last inequality assumed $\eps\leq 1$. Thus $S'$ is $(1+\eps)$-approximation for \probref{main}.
 
 Now suppose that $r_{opt}\leq 2/\eps$. In this case observe that for any point $x\in \Re^2$, the ball $B(x,r_{opt})$ can intersect only $O(1/\eps^2)$ disks from $\cbs$ as they are disjoint and all have radius 1. Thus any center from the optimal solution can cover at most $O(1/\eps^2)$ disks within the optimal radius, and so it must be that $n = O(k/\eps^2)$.
 
The algorithm is now straightforward. If $n \leq \gamma k/\eps^2$, for some sufficiently large constant $\gamma$, then by \lemref{canon} in $O((k/\eps^2)^3 \log (k/\eps))$ time we can compute a set $P$ of $O((k/\eps^2)^3)$ points such that $P$ contains an optimal set of $k$ centers. We try all possible subsets of $P$ of size $k$ and take the best one. There are $O((k/\eps^2)^{3k})$ such subsets, and for each subset its cost can be determined in $O(kn) = O((k/\eps)^2)$ time. Thus in this case we can compute the optimal solution in  
 $O((k/\eps)^2 \cdot (k/\eps^2)^{3k}) = (k/\eps)^{O(k)}$ time.
 
On the other hand, if $n > \gamma k/\eps^2$ then the above implies $r_{opt}> 2/\eps$. In this case it was argued above that $\kceneps(\eps/3,P)$ returns a $(1+\eps)$-approximation, and by \thmref{ptas} it does so in $O(n\log k)+(k/\eps)^{O(\sqrt{k})}$ time. 
In either case, we have a $(1+\eps)$-approximation (or better) and the total time is  
 $\max\{(k/\eps)^{O(k)}, O(n\log k)+(k/\eps)^{O(\sqrt{k})}\}$.
\end{proof}


\section{One Dimensional Clustering with Neighborhoods}
\seclab{oned}
In this section we show that despite clustering with neighborhoods being hard to radius approximate within any factor in the plane, we can solve the one dimensional variant exactly in $O(n\log n)$ time, even when object intersections are allowed.
First, we argue the decision problem can be solved in linear time. Then we argue that we can use a scheme similar to that in \cite{f-pslsct-91} to search for the optimal radius. 

In one dimension, a convex object is just a closed interval. Thus we have the following one dimensional version of \probref{main}, where intersections are no longer prohibited.

\begin{problem}[One Dimensional Clustering with Neighborhoods]
  \problab{problem1d}
  Given a set $\cbs$ of $n$ closed intervals on the real line, and an integer parameter $k\geq 0$, find a set of $k$ points $S$ (called centers) which minimize the maximum distance to an interval in $\cbs$. That is, 
  \[
  S=\arg \min_{S'\subset \Re, |S'|=k} \max_{\cb\in \cbs} \distX{\cb}{S'}.
  \]
\end{problem}


%
The following decision procedure is similar in spirit to various folklore results for interval problems in one dimension (for example, see the discussion in \cite{f-ghssfo-18} on interval stabbing). The challenge is turning this decision procedure into an efficient optimization procedure, for which as discussed below we make use of \cite{f-pslsct-91}.

We first sort the intervals in increasing order both by their left and by their right endpoints. We maintain cross links between the two sorted lists so that if we remove an interval from one list, its copy in the other list can be removed in constant time.

\begin{lemma}
\lemlab{feasible}
  Given an instance $\cbs,k$ of \probref{problem1d}, where the intervals have been presorted, for any query radius $r$, in $O(n)$ time one can decide whether $r \geq r_{opt}$.
  \lemlab{decision1D}
\end{lemma}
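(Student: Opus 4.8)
The goal is to decide, in $O(n)$ time on presorted intervals, whether $k$ centers of radius $r$ suffice to stab (within distance $r$) every interval in $\cbs$. The plan is to use a greedy left-to-right sweep combined with the standard Minkowski-sum observation. Observe first that a center $s$ covers an interval $\cb=[a,b]$ within radius $r$ if and only if $s$ lies in the expanded interval $\cb\oplus[-r,r]=[a-r,b+r]$. Thus the question is exactly whether the $n$ expanded intervals can be \emph{stabbed} by $k$ points, and $r\geq r_{opt}$ if and only if the minimum number of stabbing points is at most $k$. So it suffices to compute the minimum stabbing number of the expanded intervals and compare it to $k$.

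The core of the argument is the classical greedy for interval point-cover (piercing). The plan is to repeatedly take the expanded interval with the smallest right endpoint among those not yet stabbed, place a center at that right endpoint $b+r$, and then discard every expanded interval whose left endpoint is $\le b+r$ (these are all stabbed by the new center). We repeat until all intervals are stabbed, counting the number of centers placed; we then return \textbf{True} (i.e.\ $r\ge r_{opt}$) iff this count is $\le k$. Correctness follows from a standard exchange argument: the interval with the smallest right endpoint must be stabbed by \emph{some} center, and any center that stabs it lies at position $\le b+r$; replacing that center by $b+r$ only moves it rightward, so it still stabs everything the greedy discards, and this choice is safe (it can only help future intervals). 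Hence the greedy uses the minimum possible number of centers.

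The one subtlety, and the step I expect to require the most care, is realizing this greedy in $O(n)$ \emph{given the presorting}, since a naive implementation is $O(n\log n)$. Here I would exploit the two cross-linked sorted lists maintained before the lemma: the intervals sorted by right endpoint give, in order, exactly the sequence of ``smallest remaining right endpoint'' candidates, so we simply scan that list. When the head of the right-endpoint list is already stabbed by the current center we skip it; otherwise we place a center at its (expanded) right endpoint and then delete, via the cross links, all intervals whose left endpoint is at most this value. Because deletions are $O(1)$ through the cross links and every interval is examined and deleted at most once across both lists, the total work is $O(n)$. Note that sorting by \emph{right} endpoint of the expanded intervals $[a-r,b+r]$ is the same order as sorting by $b$, and by \emph{left} endpoint is the same order as sorting by $a$; hence the presorting by the original left and right endpoints (independent of $r$) is exactly what the scan needs, and no per-query sorting is incurred. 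This yields the claimed $O(n)$ decision procedure.
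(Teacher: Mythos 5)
Your proposal is correct and matches the paper's proof in all essentials: the same greedy (place a center at $\beta+r$ for the interval with smallest right endpoint, discard everything with left endpoint $\leq \beta+2r$), the same exchange argument for optimality, and the same $O(n)$ implementation via the cross-linked presorted lists. Your Minkowski-sum reformulation as interval stabbing is only a cosmetic rephrasing of the paper's observation that the covered intervals are exactly those with left endpoint $\leq \beta+2r$ given that all remaining right endpoints are $\geq \beta$.
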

\begin{proof}
  We build a set $S$ of centers as follows, where initially $S = \emptyset$. 
  Let $[\alpha,\beta]$ denote the interval with the leftmost right endpoint (i.e.\ $\beta$ is smallest among all intervals).
  We place a center at $\beta+r$ and add it $S$.
  Next we remove all intervals which intersect the ball $B(\beta+r,r)$.
  Note that these intersecting intervals are precisely those whose left endpoint is $\leq \beta+2r$, as this condition is clearly necessary to intersect $B(\beta+r,r)$, but also sufficient as all intervals have right end point $\geq \beta$.
  We then repeat this process until all intervals are removed. If $|S| \leq k$ we return True and otherwise we return False. 
  
  Observe that every time we place a center, we remove intervals it covers within distance $r$. Thus the final set $S$ is a set of centers of radius $r$, and so if $|S|\leq k$, then $r\geq r_{opt}$ and the algorithm correctly returns True. Moreover, we now argue that $S$ is a minimum cardinality set of centers of radius $r$, and thus if $|S|> k$ then the algorithm correctly returns False.
 Adopting notation from above, let $[\alpha,\beta]$ be the interval with leftmost right endpoint, and let $c$ be the center our algorithm places at $\beta+r$. Now in the minimum cardinality solution, there must be at least one center $c'$ within distance $r$ from $[\alpha,\beta]$, implying the location of $c'$ is $\leq \beta+r$. Thus $c'$ can only $r$-cover intervals with left endpoint $\leq \beta+2r$. However, as described above, $c$ $r$-covers all intervals with left endpoint $\leq \beta+2r$, and thus $c'$ $r$-covers a subset of those $c$ does. Conversely, the subset of intervals not $r$-covered by $c$ is a subset of those not $r$-covered by $c'$. By induction our algorithm uses the smallest possible number of centers to $r$-cover the intervals not $r$-covered by $c$, which therefore is at most the number centers the global minimum solution uses to $r$-cover the superset of intervals not $r$-covered by $c'$. 
 Thus overall our set of centers was an $r$-cover of minimum cardinality.

  For the running time, 
  observe that determining the location of the next center takes constant time since it only depends on the leftmost right endpoint, and we assumed we have the sorted ordering of the intervals by right endpoint. Moreover, we can remove all of the intervals intersecting the $r$ radius ball at the new center in time linear in the number of intersecting intervals, since as discussed above these intersecting intervals are a prefix of the sorted ordering by left endpoint.
  As we spend constant time per interval removed, overall this is an $O(n)$ time algorithm.
\end{proof}

\lemref{decision1D} gives us a decision procedure for \probref{problem1d} which we now wish to utilize to search for the optimum radius. We use the following lemma to reduce the search space, which can be seen as a simplification of \lemref{canon} for the one dimensional case, where here we only need to consider distances from bisecting points rather than bisecting curves.

\begin{lemma}
\lemlab{radius}
 Let $\cbs$ be a set of closed intervals. Then for any value $k$, the optimal radius for the instance $\cbs,k$ of \probref{problem1d} is either 0 or $\distX{\cb}{\cb'}/2$ for some pair $\cb,\cb'\in \cbs$.
\end{lemma}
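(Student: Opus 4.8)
The plan is to leverage the greedy decision procedure of \lemref{feasible}. For a query radius $r$, let $f(r)$ denote the number of centers that greedy produces; by \lemref{feasible} this is the minimum cardinality of an $r$-cover, so $f$ is non-increasing in $r$ and the optimal radius is exactly $r_{opt}=\min\{r\ge 0 : f(r)\le k\}$ (the minimum is attained since an optimal set of $k$ centers exists). If $f(0)\le k$ then $r_{opt}=0$ and we are done, so I would assume $r_{opt}>0$, i.e.\ $f(r)>k$ for every $r<r_{opt}$ while $f(r_{opt})\le k$.

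The key observation is that the entire combinatorial behavior of the greedy run---which interval defines each center and which intervals that center removes---is determined solely by the outcomes of the comparisons used in \lemref{feasible}. Concretely, writing each interval as $\cb_i=[\alpha_i,\beta_i]$, greedy always selects as the next defining interval the remaining one with smallest right endpoint (a fixed global order, since right endpoints do not depend on $r$), places its center at $\beta_i+r$, and removes exactly the remaining intervals $\cb_j$ with $\alpha_j\le\beta_i+2r$. Thus every decision hinges only on the sign of $\alpha_j-\beta_i-2r$, equivalently on whether $r\ge(\alpha_j-\beta_i)/2$, over pairs of input intervals. Consequently $f$ can change value only at the finite set of \emph{critical radii} $\{(\alpha_j-\beta_i)/2\}$ ranging over ordered pairs of intervals.

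I would then finish with a perturbation argument. Suppose $r_{opt}$ is not equal to $(\alpha_j-\beta_i)/2$ for any pair. Then none of the quantities $\alpha_j-\beta_i-2r$ vanishes at $r=r_{opt}$, so for a sufficiently small $\delta>0$ none of them changes sign between $r_{opt}$ and $r_{opt}-\delta$; the greedy execution is therefore identical at the two radii and $f(r_{opt}-\delta)=f(r_{opt})\le k$, contradicting $f(r)>k$ for all $r<r_{opt}$. Hence $r_{opt}=(\alpha_j-\beta_i)/2$ for some pair $\cb_i,\cb_j$. Finally, since $r_{opt}>0$ forces $\alpha_j>\beta_i$, the interval $\cb_i$ lies entirely to the left of $\cb_j$ and they are disjoint with $\distX{\cb_i}{\cb_j}=\alpha_j-\beta_i$; hence $r_{opt}=\distX{\cb_i}{\cb_j}/2$, as claimed.

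The main obstacle I anticipate is the bookkeeping in the second paragraph: one must argue cleanly that only pairwise comparisons of the form $\alpha_j\le\beta_i+2r$ govern the whole run (including the fixed processing order and the inductive choice of defining intervals), and that a positive critical value is genuinely a pairwise interval distance rather than an artifact of an overlapping pair (for which $(\alpha_j-\beta_i)/2\le 0$ and no sign change can occur at a positive radius). The perturbation step and the $r_{opt}=0$ case are then routine.
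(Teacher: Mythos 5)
Your proof is correct, but it takes a genuinely different route from the paper's. The paper proves \lemref{radius} by locally perturbing an optimal solution: for each center $s$ it considers the subset of intervals meeting $B(s,r_{opt})$ and slides $s$ toward the furthest such interval, stopping either inside an interval (radius $0$) or at the first bisector point crossed, where the covering radius is half the distance between two disjoint intervals; this is a one-dimensional replay of the canonical-set argument of \lemref{canon} and is entirely self-contained. You instead exploit the decision procedure: since \lemref{feasible} establishes that greedy produces a \emph{minimum-cardinality} $r$-cover (and \lemref{feasible} precedes \lemref{radius}, so there is no circularity), the optimum is $r_{opt}=\min\{r : f(r)\le k\}$, and you observe that the entire greedy execution is governed by comparisons of the form $\alpha_j\le\beta_i+2r$, so $f$ can change only at the finitely many critical radii $(\alpha_j-\beta_i)/2$; a perturbation to $r_{opt}-\delta$ then contradicts minimality unless $r_{opt}$ is critical. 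Your inductive bookkeeping is sound --- the choice of each defining interval depends only on the remaining set, which is itself determined by earlier comparisons with an $r$-independent tie-breaking order --- and your closing observation that $r_{opt}>0$ forces $\alpha_j>\beta_i$, hence disjointness and $r_{opt}=\distX{\cb_i}{\cb_j}/2$, correctly rules out spurious critical values from overlapping pairs. As for what each approach buys: the paper's argument generalizes (it is the same center-sliding move used in the plane) and additionally pinpoints where optimal centers may be assumed to lie, whereas your argument is tied to the specific greedy but makes explicit that the breakpoints of the decision function form a subset of the halved endpoint differences the paper later feeds to \textit{MSEARCH}, so it simultaneously certifies that the search space used in \thmref{onedmain} suffices.
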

\begin{proof}
For any value $k$, let $S$ be an optimal solution with optimal radius $r_{opt}$. Consider an arbitrary center $s \in S$, and let $\cbs'$ be the subset of $\cbs$ which intersects the ball $B(s,r_{opt})$. We can assume that $|\cbs'| \geq 1$, as otherwise $B(s,r_{opt})$ does not intersect any interval and so $s$ can be thrown out.
If $|\cbs'| = 1$, then $s$ intersects only one interval, and thus without loss of generality $s$ can be placed inside this interval, i.e.\ at distance 0 from it. 
So assume $|\cbs'| > 1$, and let $\cb$ be the furthest interval from $s$ in $\cbs'$. 
As we move $s$ towards $\cb$, so long as $\cb$ remains the furthest interval from $s$ in $\cbs'$, $B(s,\distX{s}{\cb})$ will continue to intersect all intervals in $\cbs'$. If $\cb$ always remains the furthest, when $s$ eventually reaches $\cb$, its distance to $\cb$ and hence all of $\cbs'$ will be 0. Otherwise, if before we reach $\cb$, $s$ is no longer the furthest from $s$, then we must have crossed the bisector point between $\cb$ and some other interval in $\cbs'$. In this case, we can place $s$ on this bisector point and $B(s,\distX{s}{\cb})$ will intersect all intervals in $\cbs'$, and moreover $\distX{s}{\cb}\leq r_{opt}$ since $\distX{s}{\cb}$ monotonically decreased as we moved $s$ towards $\cb$. Modifying all centers in $S$ in this way thus produces a solution whose radius is $\leq r_{opt}$ and is either 0 or the distance from a bisector point to either interval in the pair it bisects.
\end{proof}

Given a set $\cbs$ of $n$ intervals, let $P(\cbs)$ denote the set of all $2n$ left and right endpoints of the intervals in $\cbs$. To find the optimal solution to an instance $\cbs,k$ of \probref{problem1d}, by \lemref{radius}, we can binary search over the interpoint distances of points in $P(\cbs)$ using our decider from \lemref{feasible}. (When we call the decider we divide the interpoint distance by two as \lemref{radius} actually tells us it is a bisector distance.)
As there are $\Theta(n^2)$ interpoint distances, naively this approach takes $O(n^2 \log n)$ time. However, \cite{f-pslsct-91} previously showed that in the abstract setting where one is given a linear time decider, and the optimal solution is an interpoint distance, one can find the optimal solution in $O(n\log n)$ time. 
This is achieved by reducing the problem to searching in an implicitly defined sorted matrix, which for completeness we now describe.
%



A matrix is said to be \textit{sorted} if the elements in every row and in every column are in nonincreasing order. Let $P=\{p_1,\ldots,p_m\}$ be a set of $m$ values on the real line, indexed in increasing order. \cite{f-pslsct-91} defines a sorted $m-1\times m-1$ matrix from $P$, containing all interpoint distance, as follows. Let $A_i = p_i - p_1$ (i.e.\ shift the points so $p_1$ is the origin). 
Observe that for any $i<j$, $p_j-p_i = A_j - A_i$. Let $M(P)$ be the $m-1 \times m-1$ matrix whose $ij$-th entry is $A_{m+1-i} - A_j$.
It is easy to see that $M(P)$ is a sorted matrix, and as $P$ is indexed in sorted order we have an $O(n)$ space implicit representation of $M(P)$ where each entry can be computed in constant time. 

We now reproduce the description of the procedure \textit{MSEARCH}, presented in \cite{f-pslsct-91} (which combines ideas from \cite{f-oatp-91,fj-fpcgsgds-83,fj-gsrsm-84}). The input is a set of sorted matrices, a stopping count $c$, and a searching range $(\lambda_1, \lambda_2)$ such that $\lambda_2$ is feasible and $\lambda_1$ is not, where initially we set $(\lambda_1, \lambda_2) = (0,\infty)$.
\textit{MSEARCH} produces a sequence of values one at a time to be tested for feasibility, where the result of each test allows us to discard some elements in the set of matrices. 
If a value $\lambda \notin (\lambda_1, \lambda_2)$ is produced, then it does not need to be tested. If $\lambda$ is feasible, then $\lambda_2$ is reset to $\lambda$, otherwise $\lambda_1$ is reset to $\lambda$.
\textit{MSEARCH} stops once the number of matrix elements remaining is no larger than the stopping count.

\begin{lemma}[\cite{f-pslsct-91}, Theorem 2.1]
  \lemlab{msearch}
  Let $\mathcal{M}$ be a set of $N$ sorted matrices $\{M_1, M_2,\allowbreak \ldots, M_N \}$ in which matrix $M_j$ is of dimension $m_j \times n_j, m_j \leq n_j$, and $\sum^N_{j=1} m_j = m$.
  Let $c \geq 0$. The number of feasibility tests needed by \textit{MSEARCH} to discard all but at most $c$ of the elements is $O(\{\max \{\log \max_j\{ n_j\}, \log (m/(c+1)) \})$, and the total time of \textit{MSEARCH} exclusive of feasibility tests is $O(\sum^N_{j=1} m_j \log (2n_j / m_j))$.
\end{lemma}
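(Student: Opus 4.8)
Since this statement simply restates \cite{f-pslsct-91}, the plan is to reconstruct the analysis of \textit{MSEARCH} itself, showing that each feasibility test discards a constant fraction of the surviving matrix entries and that the bookkeeping between tests is cheap enough to yield the stated time. The engine is the interaction between the monotonicity of feasibility and the sortedness of each $M_j$: since every row and column is nonincreasing, a single comparison of an entry $M_j[a,b]$ with a tested value $\lambda$ simultaneously resolves the comparison for an entire rectangular block of entries (everything weakly below and to the right of $[a,b]$ is $\le M_j[a,b]$, and everything weakly above and to the left is $\ge M_j[a,b]$). So I would maintain, throughout the procedure, a partition of each matrix's still-relevant entries into contiguous rectangular \emph{cells}, each stored implicitly by its corner indices so that it occupies $O(1)$ space and its representative (say its central or corner entry) can be read in $O(1)$ time.

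First I would handle the asymmetry between $m_j$ and $n_j$. Because $n_j \ge m_j$, each matrix is ``wide,'' and the first goal is to cut it along its long dimension into near-square cells. I would repeatedly test weighted medians of column-representative values; each such test, by sortedness, lets me delete a constant fraction of the columns of those matrices that are still wide. This reshaping phase terminates after $O(\log \max_j n_j)$ tests, accounting for the first term in the test bound, and the cells it produces are, up to constant factors, square.

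Next comes the main reduction phase on the square cells. Here at each round I would form the multiset of cell representatives, weighted by the number of entries each cell still represents, compute their weighted median $\lambda$, and, provided $\lambda \in (\lambda_1,\lambda_2)$, test it. The weighted-median property guarantees that whichever way the test resolves, cells carrying at least a constant fraction of the total surviving weight become entirely decided and are discarded, while each undecided cell is split into a constant number of smaller cells. Thus the surviving element count shrinks geometrically, so after $O(\log(m/(c+1)))$ tests at most $c$ elements remain --- the second term of the bound. For the time exclusive of tests, I would charge the work of each round to the cells it touches and argue by an amortized, telescoping count over the subdivision process that the total cell-handling work for matrix $j$ is $O(m_j \log(2 n_j / m_j))$; summing over $j$ gives the claimed $O(\sum_j m_j \log(2 n_j / m_j))$.

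The main obstacle I anticipate is making the constant-fraction discard rigorous in the weighted, multi-matrix setting: one must define the cell representatives and weights so that a single weighted-median test provably eliminates a fixed fraction of the total surviving weight regardless of how the cells are distributed across the $N$ matrices, and simultaneously so that the subdivision bookkeeping telescopes to $m_j \log(2 n_j / m_j)$ rather than to something larger. Getting the two phases to compose --- so that the reshaping tests and the reduction tests \emph{add} (yielding a $\max$ in the final count) rather than multiply --- is the delicate part, and is precisely where the careful amortization of \cite{f-pslsct-91} is required.
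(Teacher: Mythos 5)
The paper does not prove this lemma at all: it is imported verbatim from Frederickson (\cite{f-pslsct-91}, Theorem~2.1) and used as a black box, so there is no internal proof to compare against, and your attempt is really a reconstruction of the cited result. As a reconstruction it has the right skeleton --- implicitly represented rectangular cells of sorted matrices, weighted-median feasibility tests on cell representatives, constant-fraction discard of surviving weight, geometric shrinkage until at most $c$ elements remain --- which is indeed the Frederickson--Johnson style argument underlying \textit{MSEARCH}. But as it stands it is a plan rather than a proof: the two steps that carry the entire theorem, namely (i) that a single round of tests provably kills a fixed fraction of the total surviving weight across all $N$ matrices and (ii) that the cell bookkeeping telescopes to $O(m_j \log(2n_j/m_j))$ per matrix rather than to $O(m_j n_j)$, are exactly the steps your final paragraph defers back to \cite{f-pslsct-91}. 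Since the statement's whole content is those two bounds, acknowledging them as open means the proposal does not yet establish the lemma.

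There is also one concrete structural error. Your ``reshaping phase'' spends feasibility tests to cut wide matrices into near-square cells, and you attribute the $O(\log \max_j n_j)$ term to it. In fact no tests are needed for reshaping: every contiguous submatrix of a sorted matrix is itself sorted, so an $m_j \times n_j$ matrix can be cut for free into $\lceil n_j/m_j \rceil$ square $m_j \times m_j$ sorted submatrices (this free cut is also where the $\log(2n_j/m_j)$ shape of the overhead term comes from). The $\log \max_j n_j$ term instead counts halving rounds: each round splits every surviving cell into quadrants and performs $O(1)$ tests, so cells become singletons after $O(\log \max_j n_j)$ rounds. Relatedly, a single ``central or corner'' representative per cell does not suffice; one needs both the smallest (upper-left) and largest (lower-right) corner values, with a weighted-median test on each family, so that whichever way a test resolves, cells lying entirely on the decided side are discarded. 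Finally, the two-phase composition you flag as the delicate point is immaterial, since $\max\{a,b\}$ and $a+b$ agree up to a factor of $2$; the genuinely delicate part is the amortized overhead count, which your sketch asserts but does not supply.
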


In our case we have a single sorted matrix which we exhaustively search for the optimum by setting $c=0$. (Note \cite{f-pslsct-91} allowed for multiple sorted matrices as their input was a tree which they decomposed into multiple paths.)  
Thus we have the following simplified corollary.

\begin{corollary}
  Given an $m\times m$ sorted matrix $M$,  
  the number of feasibility tests needed by \textit{MSEARCH} to find the optimum is 
  $O(\log m)$, and the total time of \textit{MSEARCH} exclusive of feasibility tests is $O(m)$.
\end{corollary}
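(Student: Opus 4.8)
The plan is to obtain this corollary as a direct specialization of \lemref{msearch} to the case of a single matrix searched exhaustively. First I would invoke \lemref{msearch} with $N=1$ and the single sorted matrix $M_1 = M$ of dimensions $m_1 \times n_1 = m \times m$, so that the aggregate row count $\sum_{j=1}^{N} m_j = m$ agrees with the $m$ in the corollary statement, and with stopping count $c = 0$ so that every matrix element is discarded and the optimum is pinned down.

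Next I would substitute these parameters into the two bounds of \lemref{msearch} and simplify. For the number of feasibility tests, the general bound $O(\max\{\log \max_j\{n_j\}, \log(m/(c+1))\})$ collapses to $O(\max\{\log m, \log m\}) = O(\log m)$, since $\max_j\{n_j\} = m$ and $c+1 = 1$. For the running time exclusive of feasibility tests, the general bound $O(\sum_{j=1}^{N} m_j \log(2n_j / m_j))$ reduces to a single term $O(m \log(2m/m)) = O(m \log 2) = O(m)$, because $m_1 = n_1 = m$.

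It then remains to confirm that running \textit{MSEARCH} with $c=0$ truly identifies the optimum rather than merely emptying the matrix. Here I would appeal to the monotonicity of feasibility built into the \textit{MSEARCH} setup: throughout its execution the procedure maintains the invariant that $\lambda_2$ is feasible and $\lambda_1$ is infeasible, and upon termination with $c=0$ every entry of $M$ has been classified as $\leq \lambda_1$ or $\geq \lambda_2$, with no entry lying strictly in the interval $(\lambda_1, \lambda_2)$. Consequently the smallest feasible entry of $M$---which is exactly the optimum we seek---coincides with the final value $\lambda_2$, recovered without any further tests.

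The main (and essentially only) subtlety is the bookkeeping reconciling the notation of \lemref{msearch} with that of the corollary: one must verify that the single matrix's row dimension $m_1$ is the same quantity as the aggregate $m$ of \lemref{msearch}, and that $n_1 = m_1$, since it is this equality that makes both arguments of the $\max$ equal to $\log m$ and that shrinks the per-matrix logarithmic factor $\log(2 n_1 / m_1)$ in the time bound down to the constant $\log 2$. Beyond this parameter matching there is no genuine difficulty; the statement is a straightforward reading of \lemref{msearch} under the choices $N = 1$, $m_1 = n_1 = m$, and $c = 0$.
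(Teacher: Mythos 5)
Your proposal is correct and matches the paper's argument exactly: the paper likewise obtains the corollary by specializing \lemref{msearch} to a single $m\times m$ sorted matrix with $c=0$, so that the test count becomes $O(\max\{\log m,\log m\})=O(\log m)$ and the overhead becomes $O(m\log(2m/m))=O(m)$. Your extra remark on the invariant ($\lambda_2$ feasible, $\lambda_1$ infeasible) correctly justifies why $c=0$ pins down the optimum, a point the paper leaves implicit.
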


Thus if we set $M=M(P(\cbs))$ (and hence $m+1 = 2n$ in the above corollary), 
then \textit{MSEARCH} with our linear time decision procedure from \lemref{feasible} gives
\thmref{onedmain}.

\begin{theorem}\thmlab{onedmain}
  \probref{problem1d} can be solved in $O(n\log n)$ time, where $n=|\cbs|$.
\end{theorem}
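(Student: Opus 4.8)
The plan is to assemble the three ingredients already developed: the linear-time decision procedure of \lemref{feasible}, the structural characterization of the optimal radius in \lemref{radius}, and Frederickson's implicit sorted-matrix search (\lemref{msearch} and the corollary that follows it). The only genuinely new work is to verify that these pieces fit together within the claimed time bound, since no new combinatorial insight is required beyond what the preceding lemmas supply.

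First I would perform the one-time preprocessing: sort the $n$ intervals both by left and by right endpoint, maintaining the cross links described before \lemref{feasible}, in $O(n\log n)$ time. This is exactly what \lemref{feasible} assumes, so thereafter each individual feasibility query costs only $O(n)$. I would then set $P = P(\cbs)$, the set of $2n$ endpoints in increasing order, and regard the candidate radii as $\{d/2 : d \text{ is an interpoint distance of } P\} \cup \{0\}$; by \lemref{radius} the optimum $r_{opt}$ is one of these values, since each $\distX{\cb}{\cb'}$ is a difference of two endpoints and hence an interpoint distance of $P$.

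Next I would form the implicit sorted matrix $M = M(P(\cbs))$ of dimension $m \times m$ with $m = 2n-1$, whose entries are precisely the interpoint distances $p_j - p_i$ and each of which is computable in $O(1)$ from the sorted order. I would run \textit{MSEARCH} on the single matrix $M$ with stopping count $c = 0$ and initial range $(\lambda_1,\lambda_2) = (0,\infty)$. Each value $\lambda$ that \textit{MSEARCH} proposes is an interpoint distance, so following \lemref{radius} the associated feasibility test is the decider of \lemref{feasible} applied to the radius $\lambda/2$, which answers whether $\lambda/2 \geq r_{opt}$. Since the decider destructively removes intervals from its working lists, each invocation must operate on a fresh copy of the presorted lists, which is still only $O(n)$. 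I would additionally run the decider once with radius $0$ to detect the degenerate $r_{opt}=0$ case (all intervals stabbable by $k$ points). For correctness, feasibility is monotone in the radius---\lemref{feasible} returns True exactly when the queried radius is $\geq r_{opt}$---so among the candidate radii the feasible ones are exactly those at least $r_{opt}$. Because $r_{opt}$ itself is a candidate by \lemref{radius}, the smallest feasible candidate equals $r_{opt}$, and running \textit{MSEARCH} with $c=0$ drives $\lambda_2$ down to the smallest feasible matrix entry $d^*$, so that $r_{opt}=d^*/2$ is recovered (with the $0$ case handled by the extra test). For the running time, the corollary to \lemref{msearch} guarantees $O(\log m) = O(\log n)$ feasibility tests and $O(m) = O(n)$ time internal to \textit{MSEARCH}; each feasibility test costs $O(n)$ by \lemref{feasible}, giving $O(n\log n)$ for the search, which together with the $O(n\log n)$ preprocessing yields the claimed bound.

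The main obstacle is not any deep argument but the bookkeeping needed to keep every feasibility test at $O(n)$ and to return the correct value: one must charge the presorting only once and run the destructive decider on copies, and one must correctly route the half-distance indirection of \lemref{radius} (the matrix stores $\distX{\cb}{\cb'}$ while the decider consumes $\distX{\cb}{\cb'}/2$) together with the separate $r_{opt}=0$ check, so that the value produced by \textit{MSEARCH} is exactly the optimal radius rather than off by a factor of two or missing the degenerate case.
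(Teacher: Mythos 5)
Your proposal is correct and follows essentially the same route as the paper: presort once, use the $O(n)$ decider of \lemref{feasible} on radii that are halved interpoint distances of the endpoint set $P(\cbs)$ (justified by \lemref{radius}), and drive the search with \textit{MSEARCH} on the implicit sorted matrix $M(P(\cbs))$ with stopping count $c=0$, yielding $O(\log n)$ feasibility tests at $O(n)$ each plus $O(n\log n)$ preprocessing. Your explicit extra test at radius $0$ for the degenerate $r_{opt}=0$ case (which the initial search range $(0,\infty)$ would otherwise never probe) is a careful detail the paper leaves implicit, but it does not change the approach.
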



\bibliographystyle{plain}
\bibliography{refs}%

\newpage
\appendix

\section{Extending the Disk Approximation to Higher Dimensions}\apndlab{extension}
Here we informally remark how the results in \secref{balls} can be extended to higher dimensions. To do so we need to extend \lemref{arc}, \lemref{decider}, and \thmref{radiusapprox}. 

To extend \lemref{arc}, we consider the same setup as in the proof in two dimensions. Namely let $x$, $y$, and $z$ be the centers of the disjoint $r$ radius balls (which are now in $\Re^d$), and let $s$ be some fourth point. Observe that the centers $x$, $y$, and $z$ define a two dimensional plane, and moreover the intersection of their respective balls with this plane is a set of three disjoint $r$ radius disks in this plane. Let $s'$ denote the orthogonal projection of $s$ into this plane.  We now have the same two dimensional setup as in \lemref{arc}, and thus the argument there applies so long as we can argue the distance from $s'$ to any one of the three disks in the plane lower bounds the distance from $s$ in $\Re^d$ to any one of the three balls. To argue this it suffices to observe that $||x-s'||^2\leq ||x-s||^2$ (and similarly for $y$ and $z$). Specifically, suppose we rotate space so that this plane corresponds to the first two coordinate axes. Then, 
$||x-s'||^2 = (x_1-s_1')^2+(x_2-s_2')^2 = (x_1-s_1)^2+(x_2-s_2)^2 \leq ||x-s||^2$.

A careful read of \lemref{decider} reveals that in fact the same proof works in $\Re^d$, if we just change the word ``disk'' to ``ball''. Thus at this point we have a $(5+2\sqrt{3})$-decider that works in $\Re^d$. \thmref{radiusapprox} showed how to turn this into a optimization procedure in the plane. Specifically, in the proof we search using our approximate decider over a set of values given by \lemref{canon}. The issue with extending the optimization procedure to higher dimensions is that \lemref{canon} no longer applies. 
That said, one can find an approximate set of radii to search over, and this is implied by the proof of \lemref{decider}.

Specifically, consider the proof of \lemref{decider} when $r=r_{opt}$. A set of $k$ centers $S$ is constructed which is the disjoint union of three types of centers. Namely, type $S_1$, which are center points of input disks (i.e.\ input balls in $\Re^d$), type $S_2$ which are arbitrary points in a disk, and type $S_3$ which are midpoints between two disks. Every input disk is assigned to exactly one of these types of centers. A disk is only assigned to a center of type $S_2$ if that center lies in the disk, i.e.\ is at distance zero.
Disks assigned to $S_1$ or $S_3$ centers are at distance at most $(5+2\sqrt{3})r_{opt}$ from their respective center.
So let $x$ denote the largest distance from a disk to its assigned center. Then $x\leq (5+2\sqrt{3})r_{opt}$. On the other hand $x\geq r_{opt}$, since $r_{opt}$ is the minimum disk to center distance under an optimal assignment to an optimal set of $k$ centers, whereas $x$ is determined by some set of $k$ centers $S$ and potentially a non-optimal assignment of disks to centers in $S$. 

Let $R$ be the set containing all distances between an input disk and a disk center point (i.e.\ $S_1$ type distances), the value zero (i.e.\ $S_2$ types distances), and all distances between the midpoint of two disks to one of the two disks (i.e.\ $S_3$ type distances). Then by the above $x\in R$, and thus $R$ contains a value which is constant factor approximation to $r_{opt}$. There are a quadratic number of  values in $R$, which we can compute, and then binary search over using our decider. 
This yields an $O(1)$ approximation to $r_{opt}$, though not necessarily a $(5+2\sqrt{3})$-approximation (as both the decider and the value $x$ were approximate). However, one can turn it into a $(5+2\sqrt{3}+\eps)$-approximation with $O(1/\eps)$ additional calls to our $(5+2\sqrt{3})$-decider, by using standard techniques. (Namely, given any constant spread interval $[z,cz]$ containing the optimum, one uses the decider to exponential search over all values $z(1+\eps)^i$ in this interval.) 
Thus, for any constant $\eps>0$, there is a polynomial time $(5+2\sqrt{3}+\eps)$-approximation algorithm which works for input balls in $\Re^d$, as the proof of \lemref{decider} and the above discussion generalizes from disks to balls in $\Re^d$.

%

\end{document}